\documentclass[oneside,article]{memoir}
\usepackage{amsmath}         
\usepackage{amsthm,thmtools,thm-restate} 
\usepackage{enumitem}        

\usepackage{rotating}

\usepackage[final]{listings}
\usepackage{bbm}
\usepackage{xcolor}

\definecolor{codegreen}{rgb}{0,0.6,0}
\definecolor{codegray}{rgb}{0.5,0.5,0.5}
\definecolor{codepurple}{rgb}{0.58,0,0.82}
\definecolor{backcolour}{rgb}{0.95,0.95,0.92}

\lstdefinestyle{mystyle}{
    backgroundcolor=\color{backcolour},   
    commentstyle=\color{codegreen},
    keywordstyle=\color{magenta},
    numberstyle=\tiny\color{codegray},
    stringstyle=\color{codepurple},
    basicstyle=\ttfamily\footnotesize,
    breakatwhitespace=false,         
    breaklines=true,                 
    captionpos=b,                    
    keepspaces=true,                 
    numbers=left,                    
    numbersep=5pt,                  
    showspaces=false,                
    showstringspaces=false,
    showtabs=false,                  
    tabsize=2
}

\lstset{style=mystyle}

\usepackage{algorithm}
\usepackage{algpseudocode}
\usepackage{algorithmicx}
\usepackage{threeparttable}
\usepackage{booktabs}
\usepackage{float}

\usepackage{pgfplots}
\pgfplotsset{compat=1.17}
\usepgfplotslibrary{groupplots}
\definecolor{nvcol}{RGB}{170,60,50}
\definecolor{ngcol}{RGB}{215,140,40}
\definecolor{nqcol}{RGB}{45,110,165}
\definecolor{npcol}{RGB}{40,130,90}

\usepackage{tikz}
  \usetikzlibrary{matrix,arrows,positioning}
\usepackage{amsfonts}
\usepackage{parskip}
\usepackage{mathtools}

\usepackage{tikz-cd}
\usepackage{float} 
\usepackage{caption, subcaption} 


\usepackage[nobysame,alphabetic,initials]{amsrefs} 
\usepackage[hidelinks]{hyperref}                   
\usepackage[capitalize,nosort]{cleveref}           


\usepackage{amssymb}
\usepackage[mathscr]{euscript}
\usepackage{relsize}
\usepackage{stmaryrd}
\usepackage{stackengine}

\usepackage{indentfirst} 
\usepackage{multicol}

\usepackage[draft]{fixme} 
\fxsetup{
  status=draft,
  layout=inline
}
\usepackage{graphicx}
\usepackage{subcaption}

\usepackage{float} 


\isopage[12]

\setlrmargins{*}{*}{1}
\checkandfixthelayout

\counterwithout{section}{chapter}
\usepackage{appendix}



























  \newcommand{\id}[1][]{\operatorname{id}_{#1}}
  
































  \declaretheorem[style=definition,within=section]{definition}
  \declaretheorem[style=definition,numberlike=definition]{example}


  \declaretheorem[style=plain,numberlike=definition]{proposition}
  \declaretheorem[style=plain,numberlike=definition]{theorem}
  \declaretheorem[style=plain,numberlike=definition]{conjecture}

  \declaretheorem[style=plain,numbered=no,name=Theorem]{theorem*}

  \Crefname{corollary}{Corollary}{Corollaries}
  \Crefname{definition}{Definition}{Definitions}
  \Crefname{lemma}{Lemma}{Lemmas}
  \Crefname{proposition}{Proposition}{Propositions}
  \Crefname{remark}{Remark}{Remarks}
  \Crefname{theorem}{Theorem}{Theorems}
  \Crefname{notation}{Notation}{Notations}
  \Crefname{conjecture}{Conjecture}{Conjectures}


  
  \newlist{axioms}{enumerate}{1}
  \Crefname{axiomsi}{}{}




  \newenvironment{tikzeq*}
  {
    \begingroup
    \begin{equation*}
    \begin{tikzpicture}[baseline=(current bounding box.center)]
  }
  {
    \end{tikzpicture}
    \end{equation*}
    \endgroup
    \ignorespacesafterend
  }

  \tikzset
  {
    diagram/.style=
    {
      matrix of math nodes,
      column sep={4.3em,between origins},
      row sep={4em,between origins},
      text height=1.5ex,
      text depth=.25ex
    },
    over/.style={preaction={draw=white,-,line width=6pt}},
    every to/.style={font=\footnotesize},
    inj/.style={right hook->},
    surj/.style={-{Latex[open]}},
    cof/.style={>->},
    fib/.style={->>},
  }



  \DeclareFontFamily{U}{mathx}{\hyphenchar\font45}

  \DeclareFontShape{U}{mathx}{m}{n}{
    <5> <6> <7> <8> <9> <10>
    <10.95> <12> <14.4> <17.28> <20.74> <24.88>
    mathx10}{}

  \DeclareSymbolFont{mathx}{U}{mathx}{m}{n}

  \DeclareFontFamily{U}{mathb}{\hyphenchar\font45}

  \DeclareFontShape{U}{mathb}{m}{n}{
    <5> <6> <7> <8> <9> <10>
    <10.95> <12> <14.4> <17.28> <20.74> <24.88>
    mathb10}{}

  \DeclareSymbolFont{mathb}{U}{mathb}{m}{n}

  \DeclareMathAccent{\widebar}{0}{mathx}{"73}

  \DeclareMathSymbol{\Rsh}{\mathrel}{mathb}{"E9}

  \DeclareFontFamily{U}{MnSymbolA}{}

  \DeclareFontShape{U}{MnSymbolA}{m}{n}{
    <-6> MnSymbolA5
    <6-7> MnSymbolA6
    <7-8> MnSymbolA7
    <8-9> MnSymbolA8
    <9-10> MnSymbolA9
    <10-12> MnSymbolA10
    <12-> MnSymbolA12}{}

  \DeclareSymbolFont{MnSyA}{U}{MnSymbolA}{m}{n}

  \DeclareMathSymbol{\twoheaddownarrow}{\mathrel}{MnSyA}{27}


  \newcommand{\MSC}[1]{%
    \let\thempfn\relax
    \footnotetext[0]{2020 Mathematics Subject Classification: #1.}
  }

  \newcommand{\Ker}{\textsf{Ker}}

\tikzstyle{vertex}=[circle, draw, minimum size=7pt, inner sep=0pt]





\newcommand{\im}{\mathsf{Im}} 










\DeclareFontFamily{U}{dmjhira}{}
\DeclareFontShape{U}{dmjhira}{m}{n}{ <-> dmjhira }{}

\author{Krzysztof Kapulkin \and Nathan Kershaw} 
\title{Faster computations of discrete homology}
\date{\today}

\begin{document}

  \maketitle

\begin{abstract}
  Machine computation of the discrete homology of graphs has stopped at degree two.
  We present an algorithm that reaches degree four.
  It generates the singular cubes inductively, pairing cubes one degree down instead of filtering all set maps; quotients the chain modules by the hyperoctahedral group action, over a field of sufficiently large characteristic; and shrinks the graph beforehand using homotopy invariance.
  The fourth homology group of the five-cycle, previously beyond the reach of computation, is computed in under two days.
\end{abstract}



\section*{Introduction}

\emph{Discrete homology} is an invariant of graphs, introduced by Barcelo, Capraro,
and White \cite{barcelo-capraro-white} in the setting of finite metric spaces and
studied in the context of graphs by Barcelo, Greene, Jarrah, and Welker
\cite{barcelo-greene-jarrah-welker:comparison,barcelo-greene-jarrah-welker:connections,barcelo-greene-jarrah-welker:vanishing}.
It is the homology of a chain complex whose degree-$n$ generators are the
non-degenerate graph maps out of the hypercube graph $Q^n$, and it belongs to a
wider theory of combinatorial invariants known as discrete homotopy theory
\cite{babson-barcelo-longueville-laubenbacher,barcelo-laubenbacher,carranza-kapulkin:cubical-graphs}.
Where the homology of the clique complex fills in the triangles of a graph,
discrete homology fills in both the triangles and the squares, and this small
difference is what gives it its distinctive character: it is sensitive to
combinatorial rather than topological holes, and it sees a cycle as a hole only
once that cycle has length at least five.

That sensitivity is what makes the invariant useful, and it has been put to work in
several directions.
In matroid theory, Maurer \cite{maurer:basis-graphs} showed that the basis graph of
a matroid has trivial discrete fundamental group; since the first discrete homology
group is the abelianization of that group
\cite[Thm.~4.1]{barcelo-capraro-white}, a non-vanishing $\mathcal{H}_1$ certifies
that a graph is \emph{not} the basis graph of any matroid, and this is a condition
one can actually check by computation.
In the theory of subspace arrangements, a result of Babson and Bj\"{o}rner
identifies the fundamental group of the complement of the $k$-equal arrangement with
a discrete fundamental group of the order complex of a Boolean lattice
\cite{barcelo-severs-white}, so discrete invariants of an explicitly presented graph
compute a topological invariant of an arrangement.
Further connections, to $k$-connectivity of graphs and to $q$-analysis of simplicial
complexes, are surveyed by Barcelo and Laubenbacher \cite{barcelo-laubenbacher}.

More recently the invariant has found a use in data analysis.
Vietoris--Rips persistence is reliable on metric data but degrades badly when the
triangle inequality fails, as it does for similarity measures such as correlation
between time series; the four-cycles that noise creates most easily are precisely
the ones the clique complex leaves unfilled, and they generate spurious classes that
survive across wide ranges of the filtration parameter.
Discrete homology fills those cycles by construction, so the spurious generators
never appear.
In \cite{kapulkin-kershaw:tda} we gave evidence for this.
On a noisy sample of a circle, the barcode computed with discrete homology carried
four short bars to the clique complex's seventeen; and across a thousand randomized
perturbations of a circle, the persistence diagram obtained from discrete homology
was the closer of the two to that of the unperturbed circle, in bottleneck distance,
in $943$ trials.

So there are good reasons to want these groups.
The difficulty is that discrete homology is a \emph{singular} theory, and singular
theories are expensive.
The degree-$n$ chain module is free on the non-degenerate maps $Q^n \to G$, and the
obvious way to find these examines all $|G_V|^{2^{n}}$ set maps, a count that is
doubly exponential in the degree.
Even the modules themselves are enormous.
The Greene sphere of \cref{ex:graphs} has ten vertices, and admits $442$ singular
$2$-cubes, $22\,762$ singular $3$-cubes, over $2 \times 10^{7}$ singular $4$-cubes,
and more than $6.4 \times 10^{12}$ singular $5$-cubes --- the last of which are
exactly what a computation of $\mathcal{H}_4$ requires.
This is why machine computation has stopped at degree $2$, and why the fourth
homology group of the $5$-cycle --- a graph on five vertices, whose homology is
known in all degrees by a theoretical argument --- was nonetheless regarded as beyond
the reach of computation \cite{barcelo-greene-jarrah-welker:comparison}.

This scale also explains why the standard machinery of computational topology does
not transfer.
That machinery is highly developed.
The persistence algorithm of \cite{zomorodian-carlsson:computing-persistent-homology}
reduces the boundary matrix by column operations, and the clearing and twist
optimizations \cite{chen-kerber:twist,bauer-kerber-reininghaus:chunks}, together with
the use of cohomology, make it fast enough in practice that implementations such as
\texttt{Ripser} \cite{bauer:ripser} handle Vietoris--Rips filtrations on data sets far
larger than anything considered here.
Running alongside it is a body of work on shrinking the complex before reducing it:
discrete Morse theory adapted to filtrations \cite{mischaikow-nanda:morse},
coreduction \cite{mrozek-batko:coreduction}, and edge collapse for flag complexes
\cite{boissonnat-siddharth-edge-collapse}.

What all of this shares is the assumption that the complex can be built and then
simplified.
It operates \emph{on} the complex, and it is enormously effective there.
Here the complex is the problem.
A method that begins by enumerating the cubes has already lost, and accordingly every
improvement in this paper is a way of not enumerating them --- of generating the cubes
that matter, in a form that carries the information needed later, and of never
materializing the ones that do not.
Algebraic Morse theory \cite{kozlov-algebraic-morse} is the one item on the list that
is not obviously ruled out, since it applies to any based chain complex rather than to
a cell structure; but a Morse matching is still a structure on the complex, and
whether one can be produced alongside a generation scheme, rather than after it, we
leave open.

The obstruction is not peculiar to graphs.
Essentially the same theory is studied in digital topology, where the objects are
digital images rather than abstract graphs \cite{jamil-ali:digital}, and there too the
cost of the singular construction is the recognized barrier to using it.

Closer to hand are the neighbouring homology theories of graphs, for which the
computational problem has been attacked directly.
Chowdhury and M\'{e}moli \cite{chowdhury-memoli:path} give an algorithm for path
homology and its persistent version, and Dey, Li, and Wang
\cite{dey-li-wang:path} improve it substantially in degree $1$ by identifying
structures specific to that degree.
The contrast is instructive.
Their gains come from understanding what a path-homology class in a fixed low degree
looks like, which is what makes a specialized algorithm possible; ours come from
attacking the enumeration itself, which is what the singular construction forces and
what allows us to move in degree rather than within it.

Our contribution is an algorithm that computes degree-$3$ groups for every graph in
our test set and the degree-$4$ group of the $5$-cycle, the latter in roughly $39$
hours.
Measured against the naive algorithm, the speedups run from $10^3$ to $10^{12}$ in
degree $2$ and from $10^8$ to $10^{25}$ in degree $3$.
Three ideas account for this, and they are not independent: the first makes the
other two possible.

The first is a way of generating cubes inductively.
Two maps $f, g \colon Q^{n-1} \to G$ pair to a map $Q^n \to G$ exactly when
$f(v) \sim g(v)$ for every vertex $v$, so the $n$-cubes can be assembled from the
$(n-1)$-cubes without ever examining a set map that is not a graph map
(\cref{sec:generation}).
Because a cube now arrives together with its two top faces, degeneracy data can be
carried along as it is built rather than recovered afterwards, and the remaining
faces can be read off by index arithmetic instead of precomposition
(\cref{sec:degeneracies}).

The second is a quotient.
The hypercube graph carries an action of the hyperoctahedral group
$R_n = S_n \ltimes (\mathbb{Z}_2)^n$, and by a recent result of Greene, Welker, and
Wille \cite{greene-welker-wille} one may quotient the chain complex by the induced
action before taking homology, provided the characteristic does not divide $(n+1)!$.
This removes a factor of up to $n! \cdot 2^n$ from the degree-$n$ module, and the
saving grows quickly with the degree.
It is the inductive generation that makes it usable: the group action can be tracked
through the construction, so equivalence classes are paired directly and the
individual cubes in them are never stored (\cref{sec:quotient}).

The third is to shrink the graph before computing anything.
Discrete homology is invariant under a combinatorial notion of homotopy, and we use
this to delete vertices (\cref{sec:preprocessing}).
The vertices we remove are the \emph{dominated} vertices of
\cite{barmak-et-al-dominated}, and the reduction is the discrete-homotopy analogue
of the edge collapse used for flag complexes
\cite{boissonnat-siddharth-edge-collapse}; the combinatorial condition is identical,
though the invariance theorem behind it is not.
Detecting all such vertices costs one pass over the vertex set, and since the cost of
everything downstream grows so steeply in $|G_V|$, the reduction is worth attempting
unconditionally.

We should be explicit about what these improvements do not do.
They attack the growth in the degree $n$, and on that they are effective.
They do little about the growth in the size of the graph, which remains the
principal obstacle: the complete graph $K_{10}$ is the hardest of our examples in
every degree, for the simple reason that on a complete graph every set map is a
graph map and there is nothing for the generation step to prune.
One further constraint is worth naming, since it rules out an otherwise attractive
option.
Reduction of the boundary matrix over $\mathbb{F}_2$ is the fastest route to ranks in
the persistent setting \cite{chen-kerber:twist,bauer:ripser}, but the quotient above
requires characteristic different from $2$, so the two cannot be combined; we have
chosen the quotient, which is worth considerably more.

Our code is available at
\url{https://github.com/nkershaw01/Discrete_Cubical_Homology};
the implementation and experimental setup are described in \cref{sec:experiments}.
A persistent version of these computations, for filtered graphs, is the subject of
forthcoming work.

The paper is organized as follows.
\cref{sec:homology} reviews discrete homology, and \cref{sec:naive} describes the
naive algorithm that serves as our baseline.
\cref{sec:generation} presents the inductive generation of cubes and
\cref{sec:degeneracies} the improvements it makes possible, after which
\cref{sec:quotient} develops the quotient by the hyperoctahedral group and
\cref{sec:preprocessing} treats preprocessing.
\cref{sec:experiments} reports the experiments, including the effect of
parallelization, and \cref{sec:summary} summarizes and poses open problems.

\textbf{Acknowledgements.}
First and foremost, we thank Daniel Carranza for helpful conversations regarding
this work, in particular those leading to the insights of \cref{sec:generation}.
We are grateful to Curtis Greene, Volkmar Welker, and Georg Wille for sharing early
drafts of their work on homology of cubical sets with symmetries and reversals
\cite{greene-welker-wille}.
We thank Jacob Ender for help with parallelizing the algorithm.
Lastly, we acknowledge the opportunities extended to us by the Digital Research
Alliance of Canada for running large computations.

\section{Discrete homology} \label{sec:homology}

We begin by reviewing the definition of discrete homology.
The notion was first introduced in \cite{barcelo-capraro-white} in the context of finite metric spaces, and our presentation follows that of \cite{barcelo-greene-jarrah-welker:comparison}, where it was first recorded in the context of graphs.

\begin{definition}  \label{def:graphs} \leavevmode
    \begin{enumerate}
        \item A \textit{graph} $G$ is a set equipped with a symmetric and reflexive relation.
        We write $G_V$ for the underlying set, called the \textit{vertex set} and $G_E \subseteq G_V \times G_V$ to be the relation, which we call the \textit{edge set}.
        We write $v \sim w$ to mean $(v,w) \in G_E$.
        \item A \textit{graph map} $f \colon G \to H$ from a graph $G$ to $H$ is a set map $f \colon G_V \to H_V$ that preserves the relation.
    \end{enumerate}
\end{definition}

Our graphs are thus simple and undirected, and reflexivity gives every vertex a unique loop, which allows a graph map to contract an edge to a vertex.
\begin{example} \label{ex:graphs} ${}$

\begin{enumerate}
    \item The graph $I_n$ has $n+1$ vertices, labelled $0, \dots , n$ with an edge between vertices $i$, $i+1$ for $0 \leq i \leq n-1$.
    
    \item The \textit{cycle graph on $n$ vertices}, $C_n$, is the graph $I_n$ quotient by the relation setting $0=n$.

    \item The \textit{complete graph on $n$ vertices} is a graph with $n \in \mathbb{N}$ vertices, and an edge between every pair of vertices.

    \item The \textit{star-shaped $C_5$} graph, $C_5^{star}$ is the graph $C_5$, with a triangle ``glued in'' at every edge, as depicted below.
    \begin{figure}[htbp]
    \centering
    \begin{subfigure}[b]{0.32\textwidth}
        \centering
        \begin{tikzpicture}
            \node[circle, draw, inner sep=2pt, label=above:0] (0) at (0,0) {};
            \node[circle, draw, inner sep=2pt, label=above:1] (1) at (1,0) {};
            \node[circle, draw, inner sep=2pt, label=above:2] (2) at (2,0) {};
            \node[circle, draw, inner sep=2pt, label=above:3] (3) at (3,0) {};
            
            \draw (0) -- (1);
            \draw (1) -- (2);
            \draw (2) -- (3);
        \end{tikzpicture}
        \caption{The graph $I_3$}
    \end{subfigure}
    \hfill
    \begin{subfigure}[b]{0.32\textwidth}
        \centering
        \begin{tikzpicture}
            \node[circle, draw, inner sep=2pt, label=above:0] (0) at (0,0.78) {};
            \node[circle, draw, inner sep=2pt, label=right:1] (1) at (0.95,1.31) {};
            \node[circle, draw, inner sep=2pt, label=above:2] (2) at (0.59,2.19) {};
            \node[circle, draw, inner sep=2pt, label=above:3] (3) at (-0.59,2.19) {};
            \node[circle, draw, inner sep=2pt, label=left:4] (4) at (-0.95,1.31) {};
            
            \draw (0) -- (1);
            \draw (1) -- (2);
            \draw (2) -- (3);
            \draw (3) -- (4);
            \draw (4) -- (0);
        \end{tikzpicture}
        \caption{The graph $C_5$}
    \end{subfigure}
    \hfill
    \begin{subfigure}[b]{0.32\textwidth}
        \centering 
        \begin{tikzpicture}
            \node[circle, draw, inner sep=2pt] (0) at (0,0.78) {};
            \node[circle, draw, inner sep=2pt] (1) at (0.95,1.31) {};
            \node[circle, draw, inner sep=2pt] (2) at (0.59,2.19) {};
            \node[circle, draw, inner sep=2pt] (3) at (-0.59,2.19) {};
            \node[circle, draw, inner sep=2pt] (4) at (-0.95,1.31) {};
            
            \node[circle, draw, inner sep=2pt] (5) at (-0.76, 0.6) {} ;
            \node[circle, draw, inner sep=2pt] (6) at (0.76, 0.6) {} ;
            \node[circle, draw, inner sep=2pt] (7) at (-1.32, 2) {} ;
            \node[circle, draw, inner sep=2pt] (8) at (1.32, 2) {} ;
            \node[circle, draw, inner sep=2pt] (9) at (0, 2.8) {} ;
            
            \draw (0) -- (1);
            \draw (1) -- (2);
            \draw (2) -- (3);
            \draw (3) -- (4);
            \draw (4) -- (0);

            \draw (5) -- (0);
            \draw (5) -- (4);

            \draw (6) -- (0);
            \draw (6) -- (1);

            \draw (7) -- (4);
            \draw (7) -- (3);

            \draw (8) -- (1);
            \draw (8) -- (2);

            \draw (9) -- (3);
            \draw (9) -- (2);
        \end{tikzpicture}
        \caption{The graph $C_5^{star}$}
    \end{subfigure}
    \caption{The graphs $I_3$, $C_5$, and $C_5^{star}$}
    \label{fig:combined_graphs}
\end{figure}
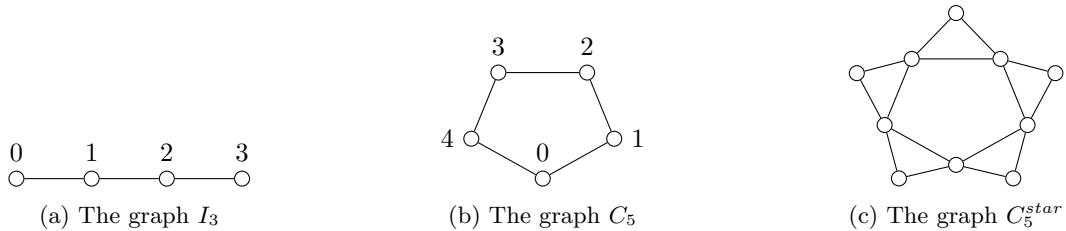

\item The Greene sphere, $G^{sph}$, depicted below.

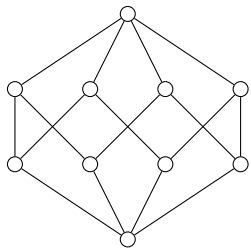
\begin{figure}[H]
    \centering
    \begin{tikzpicture}[scale=1]
        \node[circle, draw, inner sep=2pt] (A) at (0,4) {};
        
        \node[circle, draw, inner sep=2pt] (B1) at (-1.5,3) {};
        \node[circle, draw, inner sep=2pt] (B2) at (-0.5,3) {};
        \node[circle, draw, inner sep=2pt] (B3) at (0.5,3) {};
        \node[circle, draw, inner sep=2pt] (B4) at (1.5,3) {};
        
        \node[circle, draw, inner sep=2pt] (C1) at (-1.5,2) {};
        \node[circle, draw, inner sep=2pt] (C2) at (-0.5,2) {};
        \node[circle, draw, inner sep=2pt] (C3) at (0.5,2) {};
        \node[circle, draw, inner sep=2pt] (C4) at (1.5,2) {};
        
        \node[circle, draw, inner sep=2pt] (D) at (0,1) {};
        
        \draw (A) -- (B1);
        \draw (A) -- (B2);
        \draw (A) -- (B3);
        \draw (A) -- (B4);
        
        \draw (B1) -- (C1);
        \draw (B1) -- (C2);
        \draw (B2) -- (C1);
        \draw (B2) -- (C3);
        \draw (B3) -- (C2);
        \draw (B3) -- (C4);
        \draw (B4) -- (C3);
        \draw (B4) -- (C4);
        
        \draw (C1) -- (D);
        \draw (C2) -- (D);
        \draw (C3) -- (D);
        \draw (C4) -- (D);
    \end{tikzpicture}
    \caption{The Greene sphere}
\end{figure}
\end{enumerate}
\end{example}

Among the many notions of graph product \cite{imrich-klavzar}, the one we need is the box product, also called the cartesian product.

\begin{definition} \label{def:box_prod}
    Given two graphs $G$ and $H$, their \textit{box product}, denoted $G \square H$, is the graph with vertex set $G_V \times H_V$, and such that there is an edge between vertices $(v,w)$ and $(v',w')$ if either $v=v'$ and $w \sim w'$, or $v \sim v'$ and $w = w'$. 
\end{definition}

This gives our last benchmark example.

\begin{example}
The \emph{discrete $3$-torus} $T^3$ is the graph obtained by taking the quotient of $I_5^{\square 3}$ relating vertices $(0,i,j) = (5,i,j)$, $(i,0,j) = (i,5,j)$, and $(i,j,0) = (i,j,5)$.
\end{example}
With $125$ vertices, $T^3$ is the largest of our test graphs.

\begin{figure}[htbp]
    \centering
    \begin{subfigure}[b]{0.32\textwidth}
        \centering
        \begin{tikzpicture}
            \node[circle, draw, inner sep=2pt] (0) at (0,0) {};
            \node[circle, draw, inner sep=2pt] (1) at (1,0) {};

            \draw (0) -- (1);
            
        \end{tikzpicture}
        \caption{$I_1$}
    \end{subfigure}
    \hfill
    \begin{subfigure}[b]{0.32\textwidth}
        \centering
        \begin{tikzpicture}
            \node[circle, draw, inner sep=2pt] (0) at (0,0) {};
            \node[circle, draw, inner sep=2pt] (1) at (0,1) {};
            \node[circle, draw, inner sep=2pt] (2) at (1,0) {};
            \node[circle, draw, inner sep=2pt] (3) at (1,1) {};

            \draw (0) -- (1);
            \draw (1) -- (3);
            \draw (2) -- (3);
            \draw (2) -- (0);

        \end{tikzpicture}
        \caption{$I_1^{\square 2}$}
    \end{subfigure}
    \hfill
    \begin{subfigure}[b]{0.32\textwidth}
        \centering 
        \begin{tikzpicture}
            \node[circle, draw, inner sep=2pt] (0) at (0,0) {};
            \node[circle, draw, inner sep=2pt] (1) at (0,1) {};
            \node[circle, draw, inner sep=2pt] (2) at (1,0) {};
            \node[circle, draw, inner sep=2pt] (3) at (1,1) {};

            \node[circle, draw, inner sep=2pt] (4) at (0.5,0.5) {};
            \node[circle, draw, inner sep=2pt] (5) at (0.5,1.5) {};
            \node[circle, draw, inner sep=2pt] (6) at (1.5,0.5) {};
            \node[circle, draw, inner sep=2pt] (7) at (1.5,1.5) {};

            \draw (0) -- (1);
            \draw (1) -- (3);
            \draw (2) -- (3);
            \draw (2) -- (0);

            \draw (4) -- (5);
            \draw (5) -- (7);
            \draw (6) -- (7);
            \draw (6) -- (4);

            \draw (0) -- (4);
            \draw (1) -- (5);
            \draw (2) -- (6);
            \draw (3) -- (7);
            
        \end{tikzpicture}
        \caption{$I_1^{\square 3}$}
    \end{subfigure}
    \caption{The graphs $I_1$, $I_1^{\square 2}$, and $I_1^{\square 3}$}
    \label{fig:box-powers}
\end{figure}
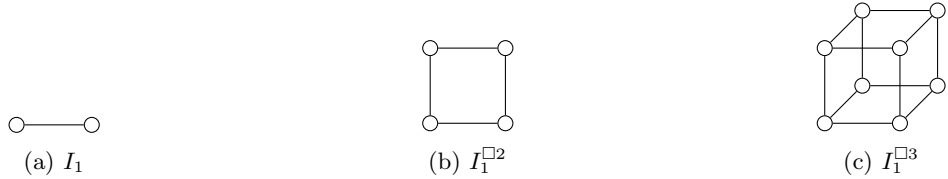

We can now define discrete homology. 

\begin{definition} \label{def:n-cube}
    The \textit{discrete $n$-cube} (or the \emph{hypercube graph}), denoted $I_1^{\square n}$ is the iterated box product of $I_1$ with itself $n$ times.
    We label the vertices $(x_1,\dots,x_n)$ for $x_i \in \{0,1\}$. 
\end{definition}

The name \emph{hypercube graph} and the notation $Q^n$, used in the Introduction, might be more familiar to those with combinatorics background.
We would typically prefer the name \emph{discrete $n$-cube} because of our intended application.

\begin{definition} \label{def:sing_n-cube}
    Let $G$ be a graph. Then a \textit{singular $n$-cube} in $G$ is a map $A \colon I_1^{\square n} \to G$. 
\end{definition}

We often call these simply $n$-cubes in $G$.

\begin{definition} \label{def:face_maps}
    Let $G$ be a graph and $A$ a singular $n$-cube for $n \geq 1$. Then for $1 \leq i \leq n$, we define the \textit{positive and negative face maps} $\delta_i^- A$ and $\delta_i^+ A$ to be the singular $(n-1)$-cubes given by:
    \[ \delta_i^+ A (x_1, \dots, x_{n-1}) := A(x_1, \dots, x_{i-1}, 1, x_{i}, \dots, x_{n-1}), \]
    \[ \delta_i^- A (x_1, \dots, x_{n-1}) := A(x_1, \dots, x_{i-1}, 0, x_{i}, \dots, x_{n-1}). \]

    If $\delta_i^- A = \delta_i^+ A$ for some $i$, we say that $A$ is \textit{degenerate}. Otherwise, we say that $A$ is \emph{non-degenerate}.
    In particular, all 0-cubes are non-degenerate. 
\end{definition}

For the remainder of the section, fix a commutative ring $R$ with unity and a graph $G$. 

\begin{definition}
    For each $n \geq 0$, define $\mathcal{L}_n(G)$ to be the free $R$ module generated by all singular $n$-cubes in $G$. Let $\mathcal{D}_n(G)$ be the submodule generated by the degenerate $n$-cubes in $G$, and set $\mathcal{C}_n(G) = \mathcal{L}_n(G)/\mathcal{D}_n(G)$. 
\end{definition}
The quotient $\mathcal{C}_n(G)$ is free on the set of non-degenerate $n$-cubes in $G$, and we identify it with that free module throughout. 

\begin{definition}
    Given a singular $n$-cube $A$ for $n \geq 1$, define the \textit{boundary} of $A$ to be the formal sum

    $$\partial_n(A) = \sum_{i=1}^n(-1)^i(\delta_i^-A -\delta_i^+ A)$$
\end{definition}

Extending linearly gives us a map $\partial_n \colon \mathcal{L}_n(G) \to \mathcal{L}_{n-1}(G)$. One can show that for $n \geq 0$, we have that $\partial_n[\mathcal{D}_n(G)] \subseteq \mathcal{D}_{n-1}(G)$ and that $\partial_n \circ \partial_{n+1} = 0$ \cite{barcelo-capraro-white}. 
Setting $\mathcal{C}_{-1}(G) = 0$ and $\partial_0$ to be the trivial map, we have a chain complex $\mathcal{C}_\bullet(G) = (\mathcal{C}_\bullet, \partial_\bullet)$ of free $R$-modules. 

\begin{definition}
    For $n \geq 0$, the \textit{$n$-th discrete homology group} of $G$ is $\mathcal{H}_n(G) = \Ker \partial_n / \im \partial_{n+1}$.
\end{definition}

We record the discrete homology of some of the graphs of \cref{ex:graphs}; these are the examples used for benchmarking in \cref{sec:experiments}.

\begin{example} \label{ex:homology_groups}
    The discrete homology groups of $C_5$, $G^{sph}$, and $T^3$ are as follows.

    \begin{itemize}
        \item $\mathcal{H}_n(C_5) = 
        \begin{cases}
            R & n=0,1 \\
            0 & n \geq 2
        \end{cases}$

        \item $\mathcal{H}_n(G^{sph}) = 
        \begin{cases}
            R & n=0,2 \\
            0 & n=1, 3 \\
            \text{unknown} & n\geq 4
        \end{cases}$ 

        \item $\mathcal{H}_n(T^3) = 
        \begin{cases}
            R & n=0,3 \\
            R^3 & n=1,2 \\
            \text{unknown} & n\geq 4
        \end{cases}$
    \end{itemize}
\end{example}
\section{The naive algorithm} \label{sec:naive}

We now describe the ``naive algorithm'' for computing discrete homology,
in which every step follows directly from the definitions.
It is slow, and serves as the baseline against which the rest of the paper is
measured.
From this point on we compute over a field $F$.

There are four steps: generating the graph maps, removing the degeneracies,
building the boundary matrices, and computing their ranks.
Rank computation over a field is well studied
\cite{wiedemann:matrix,coppersmith:matrix,turner:matrix,dumas-giorgi-pernet:matrix,dumas-pernet:matrix,jeannerod-pernet-storjohann:matrix},
and we do not treat it here.
As it turns out, it is not the bottleneck at any stage (\cref{sec:experiments}).

\paragraph{Generating the maps.}
Fix a graph $G$ and $n \geq 0$.
Computing $\mathcal{H}_n(G)$ requires all graph maps into $G$ from
$I_1^{\square (n-1)}$, $I_1^{\square n}$, and $I_1^{\square (n+1)}$.
We store a graph map $f \colon G \to H$ as an array of length $|G_V|$ with entries
in $H_V$, the $i$-th entry recording the image of the $i$-th vertex of $G$.
Since a graph map is a set map preserving the edge relation, we enumerate all set
maps and test each one.
Writing the vertices of $I_1^{\square n}$ as $x_1, \dots, x_{2^n}$ and a candidate
map as $A = [v_1, \dots, v_{2^n}]$, the test loops over all
$(x_i, x_j) \in (I_1^{\square n})_E$ and checks that $(v_i, v_j) \in G_E$.

\paragraph{Removing the degeneracies.}
Recall that $A$ is degenerate when $\delta_i^- A = \delta_i^+ A$ for some $i$, so
we first need the faces of $A$.
These are computed by looping over the vertices $v$ of $I_1^{\square (n-1)}$ and
appending $\delta_i^\varepsilon A(v)$ to a list.
We then pass through the generated set of $n$-cubes and, for each $A$ and each
$1 \leq i \leq n$, compare $\delta_i^- A$ with $\delta_i^+ A$, discarding $A$ as
soon as the two agree.

\paragraph{Building the boundary matrices.}
The previous two steps produce bases for $\mathcal{C}_{n+1}(G)$,
$\mathcal{C}_{n}(G)$, and $\mathcal{C}_{n-1}(G)$.
For each basis element $A$ of $\mathcal{C}_k(G)$ we build a sparse vector
representing $\partial A$, initialized to zero.
For each $1 \leq i \leq k$ and each $\varepsilon \in \{-, +\}$ we compute
$\delta_i^\varepsilon A$ and search for it in the stored list of non-degenerate
$(k-1)$-cubes.
If it is absent it is degenerate, hence zero in $\mathcal{C}_{k-1}(G)$, and we skip
it; if it occurs in position $j$ we add $\pm 1$ to the $j$-th entry, with the sign
given by the boundary formula.
Taking these vectors as columns gives a sparse representation of $\partial_k$,
which we form for $k = n$ and $k = n+1$.
Computing the two ranks then yields $\dim_F \mathcal{H}_n(G; F)$.

The cost of this procedure is dominated by the first step.
Generating the $(n+1)$-cubes alone examines $|G_V|^{2^{n+1}}$ set maps, which puts
$\mathcal{H}_3(C_5)$ at an estimated $5 \times 10^5$ seconds and everything beyond
it out of reach; the full figures are given in \cref{sec:experiments}.
Improving the generation of cubes is therefore the natural place to begin.

\section{Generating singular $n$-cubes} \label{sec:generation}
The naive algorithm scales poorly in both the size of the graph and the degree, and the generation of graph maps is the reason.
Since $I_1^{\square (n+1)}$ has $2^{n+1}$ vertices, computing $\mathcal{H}_n(G)$ by the method of \cref{sec:naive} tests $|G_V|^{2^{n+1}}$ set maps.
This section replaces that enumeration.

\paragraph{Pairing lower cubes to form higher ones.}
The procedure of \cref{sec:naive} accepts any graph as domain, and so makes no use of the structure of $I_1^{\square n}$.
The following observation does.

\begin{proposition}
    Let $G_0 = I_1^{\square n}$, with vertices relabelled as $(x_1, \dots, x_n, 0)$, and $G_1 = I_1^{\square n}$, with vertices relabelled as $(x_1, \dots, x_n, 1)$. Define the set 
    \[ E = \{((x_1, \dots ,x_{n}, \varepsilon), (x_1, \dots ,x_{n},1-\varepsilon)) \ | \ x_i, \varepsilon \in \{0,1\}\} \text{,}\]
    and let $G$ to be the graph such that $G_V = {G_0}_V \sqcup {G_1}_V$, and $G_E = {G_0}_E \sqcup {G_1}_E \sqcup E$.
    Then $G = I_1^{\square n+1}$. 
    \qed
\end{proposition}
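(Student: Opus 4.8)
The plan is to recognize the statement as the standard inductive presentation of the hypercube, namely that $I_1^{\square n+1} = I_1^{\square n} \square I_1$, and then to check that the graph $G$ assembled from two copies of the $n$-cube joined along corresponding vertices is exactly this box product. First I would unwind the iterated box product in \cref{def:n-cube}: up to the (evident) associativity of $\square$ we have $I_1^{\square n+1} = I_1^{\square n} \square I_1$, whose vertex set consists of tuples $(x_1,\dots,x_n,x_{n+1})$ with $x_i \in \{0,1\}$. Writing $x_{n+1} = \varepsilon$, this set is identified with $G_V = {G_0}_V \sqcup {G_1}_V$, where $G_\varepsilon$ collects precisely the vertices whose last coordinate equals $\varepsilon$. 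This is a genuine disjoint union since the final coordinate distinguishes the two copies, so the vertex sets match under the obvious bijection.

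Next I would compare edge sets under this identification. Applying \cref{def:box_prod} to $I_1^{\square n} \square I_1$, two vertices $(x,\varepsilon)$ and $(y,\eta)$, with $x,y \in \{0,1\}^n$ and $\varepsilon,\eta \in \{0,1\}$, are adjacent exactly when either $x = y$ and $\varepsilon \sim \eta$ in $I_1$, or $x \sim y$ in $I_1^{\square n}$ and $\varepsilon = \eta$. I would split into two cases according to the last coordinate. When $\varepsilon = \eta$, the adjacency condition reduces to $x \sim y$ in the $n$-cube, reproducing exactly the edges of ${G_\varepsilon}_E$, i.e.\ ${G_0}_E$ for $\varepsilon = 0$ and ${G_1}_E$ for $\varepsilon = 1$. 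When $\varepsilon \neq \eta$, the first clause forces $x = y$ with $\{\varepsilon,\eta\} = \{0,1\}$ while the second clause is vacuous, yielding precisely the edge between $(x,0)$ and $(x,1)$, which is the defining content of $E$.

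The one point requiring genuine care is reflexivity: since every graph here is reflexive, $I_1$ carries the loops $0 \sim 0$ and $1 \sim 1$ alongside the edge $0 \sim 1$, and similarly for $I_1^{\square n}$. I would observe that each loop $((x,\varepsilon),(x,\varepsilon))$ already arises within the slice $G_\varepsilon$ (the subcase $x \sim y$ with $x = y$), so all loops live inside ${G_0}_E \sqcup {G_1}_E$ and $E$ contains no loops at all. Consequently the three families ${G_0}_E$, ${G_1}_E$, $E$ partition the edge set of $I_1^{\square n+1}$ without overlap, which both justifies the disjoint union asserted in the statement and completes the identification $G = I_1^{\square n+1}$. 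I do not anticipate a serious obstacle: the substantive content is simply that box product with $I_1$ realizes the inductive doubling of the hypercube, and the only thing to watch is the bookkeeping of reflexive loops so that the union of edge sets is genuinely disjoint rather than merely a union.
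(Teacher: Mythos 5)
Your proposal is correct and follows essentially the same route as the paper's (omitted) argument: identify the construction as $I_1^{\square n}\square I_1$ and verify the edge sets by a case split on the last coordinate, with the vertex identification being immediate. Your extra attention to the reflexive loops landing in ${G_0}_E\sqcup{G_1}_E$ rather than $E$ is a nice touch but does not change the substance.
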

Two copies of $I_1^{\square n}$ are glued to form $I_1^{\square (n+1)}$, the first becoming its $(n+1)$-st negative face and the second its $(n+1)$-st positive face.
This is the box product with $I_1$ in different terminology.

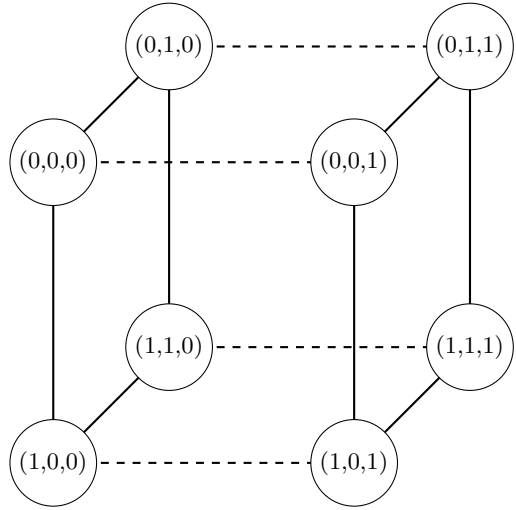
\begin{figure}[H]
\centering
\begin{tikzpicture}
\coordinate (A) at (0,0,0);
\coordinate (B) at (4,0,0); 
\coordinate (C) at (4,4,0);
\coordinate (D) at (0,4,0);
\coordinate (E) at (0,0,4);
\coordinate (F) at (4,0,4);
\coordinate (G) at (4,4,4);
\coordinate (H) at (0,4,4);

\draw[thick] (H) -- (D) -- (A) -- (E) -- cycle;

\draw[thick] (G) -- (C) -- (B) -- (F) -- cycle;

\draw[dashed, thick] (H) -- (G);
\draw[dashed, thick] (D) -- (C);
\draw[dashed, thick] (A) -- (B);
\draw[dashed, thick] (E) -- (F);

\node[circle, draw, fill=white, minimum size=25pt, inner sep=2pt,  font=\small] at (A) {(1,1,0)}; 
\node[circle, draw, fill=white, minimum size=25pt, inner sep=2pt, font=\small] at (B) {(1,1,1)};
\node[circle, draw, fill=white, minimum size=25pt, inner sep=2pt, font=\small] at (C) {(0,1,1)};
\node[circle, draw, fill=white, minimum size=25pt, inner sep=2pt, font=\small] at (D) {(0,1,0)};
\node[circle, draw, fill=white, minimum size=25pt, inner sep=2pt, font=\small] at (E) {(1,0,0)};
\node[circle, draw, fill=white, minimum size=25pt, inner sep=2pt, font=\small] at (F) {(1,0,1)};
\node[circle, draw, fill=white, minimum size=25pt, inner sep=2pt, font=\small] at (G) {(0,0,1)};
\node[circle, draw, fill=white, minimum size=25pt, inner sep=2pt, font=\small] at (H) {(0,0,0)};
\end{tikzpicture}
\caption{$I_1^{\square 2}$ paired with itself to form $I_1^{\square 3}$.}
\end{figure}

The same applies to singular cubes in $G$. This way, we need to check fewer edges: if $A = [ a_1, \dots, a_{2^n} ]$ and $B = [ b_1, \dots, b_{2^n} ]$, pairing them yields an $(n+1)$-cube precisely when $(a_i,b_i) \in G_E$ for every $i = 1, \dots, 2^n$. 
If $A$ and $B$ form an $(n+1)$-cube, we denote this cube by $A * B$.
We write $\texttt{is\_pair\_n\_cube}(A,B)$ for the resulting test. 
If the test succeeds, $A*B$ is an $(n+1)$-cube with $\delta_{n+1}^-(A*B)=A$ and $\delta_{n+1}^+(A*B)=B$.
To generate all $n$-cubes we run this test over all pairs in $\mathcal{L}_{n-1}(G) \times \mathcal{L}_{n-1}(G)$. 

Every $n$-cube has a unique $n$-th negative face and a unique $n$-th positive face, so this produces each $n$-cube exactly once from the $(n-1)$-cubes.
Rather than generating the $(n-1)$-, $n$-, and $(n+1)$-cubes separately, we therefore build them in sequence, starting from the $0$-cubes.

Generating the $(n+1)$-cubes now requires $|\mathcal{L}_n(G)|^2$ tests, each of which is itself cheaper, comparing $2^n$ pairs of vertices rather than the $n2^{n}$ edges of $I_1^{\square (n+1)}$.
On a complete graph nothing is gained: there $|\mathcal{L}_n(G)| = |G_V|^{2^{n}}$, and the count returns to $|G_V|^{2^{n+1}}$.
The advantage appears as soon as $G$ is not complete, and grows with its sparsity, since $|\mathcal{L}_n(G)|$ is then far smaller than $|G_V|^{2^{n}}$.
This is visible in \cref{sec:experiments}, where the gain is largest for $T^3$ and absent for $K_{10}$.

\paragraph{Precomputing vertex neighborhoods.}

\begin{definition}
    Given a graph $G$ and a vertex $v \in G_V$, the \textit{neighborhood} of $v$ is defined to be the set $N(v) = \{w \in G_V | (v,w) \in G_E \}$, i.e. the set of vertices $v$ is connected to. 
\end{definition}

Note that by this definition, $v \in N(v)$. 
Testing $(v,w) \in G_E$ by scanning the edge set is wasteful when the same test is repeated as often as it is inside $\texttt{is\_pair\_n\_cube}$.
We therefore precompute a dictionary of all neighborhoods once, before any maps are generated, and replace each edge test by a membership test in $N(v)$.
This is a well-studied idea, cf.\ \cite{chiba-nishizeki-adjacency,valiente:adjacency}.

\paragraph{Not storing the ($n+1$)-cubes.}
The $n$- and $(n-1)$-cubes must be kept: they generate the cubes above them, and the non-degenerate ones index the boundary matrices.
The $(n+1)$-cubes do neither.
Their only role is to contribute a column to $\partial_{n+1}$, so as soon as a pairing is found to be a valid cube we test it for degeneracy.
If it is not, we compute its boundary against the already-generated non-degenerate $n$-cubes, store the resulting coordinate vector, and discard the cube itself.

The gain here is in memory rather than time.
A coordinate vector is sparse, with at most $2(n+1)$ non-zero entries, whereas the cube it came from is an array of $2^{n+1}$ vertices.

\section{Degeneracies and generation of matrices} \label{sec:degeneracies}

Pairing does more than speed up generation: because a cube now arrives together with its two top faces, it also carries information that the naive algorithm has to recover by computation.

\paragraph{Keeping track of degenerate coordinates.}
The naive algorithm finds the degenerate cubes by computing every face of every cube.
The following observation makes that unnecessary.

\begin{proposition} \label{prop:degens}
Let $A$ and $B$ be singular $n$-cubes in a graph $G$ such that $A * B$ is an
$(n+1)$-cube.
For $1 \leq i \leq n$, the cube $A*B$ is degenerate in the coordinate $i$ if and
only if both $A$ and $B$ are.
It is degenerate in the coordinate $n+1$ if and only if $A = B$.
\end{proposition}

\begin{proof}
    It is easy to check that for $1 \leq i \leq n$, $\delta_i^\varepsilon(A * B) = \delta_i^\varepsilon(A) * \delta_i^\varepsilon(B)$.
    By definition, $\delta_{n+1}^-(A * B) = A$ and $\delta_{n+1}^+(A * B) = B$. 
    The result is then immediate. 
\end{proof}
\cref{prop:degens} lets us maintain the set of degenerate coordinates as the cubes
are generated, rather than recovering it afterwards.
A cube is stored as a pair of arrays: the map itself, and its set of degenerate
coordinates.
All $0$-cubes are non-degenerate.
When $A$ and $B$ pair to an $(n+1)$-cube, its set of degenerate coordinates is the
intersection of those of $A$ and $B$, together with $n+1$ in case $A = B$.
Filtering then reduces to testing that set for emptiness.

\paragraph{Using vertex ordering for faces.} 
An $n$-cube $A$ was formed as $\delta_n^-(A) * \delta_n^+(A)$, and pairing concatenates the two lists of vertices.
Its $n$-th negative face is therefore the first half of the list and its $n$-th positive face the second half, and the same holds inductively within each half. 
Concretely, order the vertices of $I_1^{\square n}$ as $X_n = [x_1, \dots, x_{2^n}]$ by the integer $\sum_{j} x_j 2^{j-1}$, so that $x_k$ is the binary expansion of $k-1$ read least significant digit first.
The $i$-th entry of $A$ is then the image $A(x_i)$, and the faces of $A$ can be read off by index arithmetic instead of precomposition.

Given an $i$, we want to analyze the list $[x_1[i], \dots x_{2^n}[i]]$. 
Due to the ordering of $x_1 , \dots x_{2^n}$, this list will look like an alternating sequence of $2^{i-1}$ zeroes followed by $2^{i-1}$ ones, repeated $2^{n-i}$ times.
For example, $X_3=[000,100,010,110,001,101,011,111]$, and when $i=2$ we get the sequence $[0,0,1,1,0,0,1,1]$.
This gives a closed form for the faces.

Writing $\varepsilon \in \{0,1\}$ for the two signs, the $k$-th entry of
$\delta_i^\varepsilon(A)$ is therefore the entry of $A$ in position
\[
  \left\lfloor \frac{k-1}{2^{i-1}} \right\rfloor 2^{i}
  \;+\; \varepsilon\, 2^{i-1}
  \;+\; \bigl((k-1) \bmod 2^{i-1}\bigr) \;+\; 1,
  \qquad 1 \leq k \leq 2^{n-1}.
\]
The three summands locate, respectively, the block of $2^{i}$ consecutive positions
containing $k$, the half of that block selected by $\varepsilon$, and the offset
within that half.
Computing a face is thus $2^{n-1}$ index computations with no precomposition and no
search.

\paragraph{Using a dictionary for coordinates.}
Converting a boundary to a coordinate vector requires locating each face among the non-degenerate cubes one degree down. 
In the naive algorithm this is a linear search through the non-degenerate $(k-1)$-cubes.

As with neighborhoods, we replace the search by a dictionary, keyed by cube and
returning its index.
Each face is then a single lookup, and a missing key identifies a degenerate face,
which contributes nothing to the boundary.

\section{Quotienting by the hyperoctahedral group} \label{sec:quotient}

The graph $I_1^{\square n}$ admits an action of the $n$-th hyperoctahedral group
$R_n = S_n \ltimes (\mathbb{Z}_2)^n$, and by a result of Greene, Welker, and Wille
\cite{greene-welker-wille} the chain complex may be quotiented by the induced
action before homology is taken, without changing the answer.
Since $|R_n| = n! \cdot 2^n$, this is the largest single reduction available to us,
and it grows with the degree.

The result of \cite{greene-welker-wille} is stated for arbitrary cubical sets and
proved over $\mathbb{Q}$; it applies here because $\mathcal{C}_\bullet(G)$ is the
normalized chain complex of the singular cubical set of $G$
\cite{carranza-kapulkin:cubical-graphs}.
In degree $n$ it holds over any field whose characteristic does not divide
$(n+1)!$, so from this point on we compute $\mathcal{H}_n$ over $\mathbb{F}_p$ with
$p$ the smallest prime greater than $n+1$.
We state everything in the language of graphs.
For that, recall that the vertices of $I_1^{\square n}$ are $(I_1^{\square n})_V = \{(x_1, \dots, x_n) \ | \ x_i \in \{0,1\}  \}$.

\begin{definition} \label{def:rev,symm} ${}$
    \begin{enumerate}
        \item A \textit{generating reversal} is a graph map $\rho_i^n \colon I_1^{\square n} \to I_1^{\square n}$ by $\rho_i^n(x_1, \dots , x_n ) = (x_1, \dots, x_{i-1}, 1-x_i, x_{i+1}, \dots x_n)$. For a given $n$, the \textit{$n$-th set of reversals} is the set (or group) of graph maps generated under composition by all generating reversals.

        \item Let $S_n$ be the $n$-th symmetric group. For an element $\tau \in S_n$, define a map $T_\tau \colon I_1^{\square n} \to I_1^{\square n}$ by $T_\tau (x_1,\dots, x_n) = (x_{\tau(1)}, \dots x_{\tau(n)})$. 
        We call the set (or group) of all such maps the \textit{$n$-th set of symmetries}.
    \end{enumerate}
\end{definition}

We then have the expected:

\begin{proposition} \label{prop:hyperoct}
    The automorphism group $\textsf{Aut}(I_1^{\square n})$ of $I_1^{\square n}$ is the $n$-th hyperoctahedral group $S_n \ltimes (\mathbb{Z}_2)^n$, and it is generated by the $n$-th set of reversals and $n$-th set of symmetries. \qed
\end{proposition}

By the semidirect product decomposition of \cref{prop:hyperoct}, every $\sigma \in R_n$ is uniquely a pair $(\tau, r)$ with $\tau \in S_n$ and $r = (r_1, \dots, r_n)$, $r_i \in \{1,-1\}$; we let $\sigma$ act on cubes by precomposition, $\sigma(A) = A \circ \sigma$.
Here, $r_i = 1$ means that the reversal part is constant in that coordinate, and $r_i = -1$ means that it is not.

\begin{definition}
    The sign of a permutation $\tau$, denoted $\textsf{sgn}(\tau)$, is  $(-1)^{N(\tau)}$ where $N(\tau)$ is the number of inversions.
    For $\sigma = (\tau, r) \in R_n$, the \textit{sign} of $\sigma$ is defined by $\textsf{sgn}(\sigma)=\textsf{sgn}(\tau) \cdot \prod_{i=1}^n r_i$.
\end{definition}

Alternatively, one can define the sign of $(\tau, r)$ as the determinant of the matrix in the natural representation of the hyperoctahedral group.
We do not pursue it here, since this point of view does not aid the computation.

The content of the theorem below is that $A$ and $\textsf{sgn}(\sigma)\sigma(A)$ may be identified.

\begin{definition} \label{def:sub_ch_cpx}
    Given a graph $G$, define the subcomplex $\mathcal{R}_\bullet (G) \subseteq \mathcal{C}_\bullet(G)$ by 
    \[ \mathcal{R}_n(G) =\left\langle \{A - \textsf{sgn}(\sigma) \sigma(A) \ | \ A \in \mathcal{C}_n(G), \sigma \in R_n\} \right\rangle \text{.} \]
\end{definition}

\begin{theorem}
    Let $G$ be a graph, $n \in \mathbb{N}$, and $p$ a prime not dividing $(n+1)!$. 
    Let $\mathcal{C}_\bullet(G)$ and $\mathcal{R}_\bullet(G)$ be as above. 
    Then $H_n(\mathcal{C}_\bullet(G);\mathbb{F}_p) \cong H_n(\mathcal{C}_\bullet(G)/\mathcal{R}_\bullet(G);\mathbb{F}_p)$.
\end{theorem}
\begin{proof}
    This can be found in \cite{greene-welker-wille}.
\end{proof}

We refer to the cosets as equivalence classes under $A \sim \textsf{sgn}(\sigma)\sigma(A)$, and compute the boundary of one representative per class. 
Since $|R_n| = n! \cdot 2^n$, this removes up to that factor from the number of boundaries computed, and the saving grows rapidly with the degree.
We need to modify the following parts of the existing algorithm:
\begin{itemize}
    \item generation of (equivalence classes of) maps
    \item removal of degenerate cubes
    \item boundary matrix computation
\end{itemize}

\paragraph{Generation of maps.}
What we need is a way to generate equivalence classes directly, and the class of the identity map $I_1^{\square n} \to I_1^{\square n}$ determines all the others. 
Maps are arrays of vertices, so if the identity map is $[v_1, \dots, v_{2^n}]$ then each $\textsf{sgn}(\sigma)\sigma(\id)$ is a permutation $[v_{m_1}, \dots, v_{m_{2^n}}]$ of it, together with the sign of $\sigma$. 
Now, for a graph $G$ and a map $I_1^{\square n} \to G$ represented by $[w_1, \dots w_{2^n}]$, the element of the equivalence class corresponding to the same $\sigma \in R_n$
will be the same permutation $[w_{m_1}, \dots, w_{m_{2^n}}]$. 
Storing these permutations and signs once, for the whole group, lets us produce the class of any map on demand.

Generation now produces lists of equivalence classes rather than lists of maps, and pairs classes $\mathcal{A}$ and $\mathcal{B}$ rather than individual cubes, by testing the representative of $\mathcal{A}$ against every element of $\mathcal{B}$.
We only need to consider the representative of $\mathcal{A}$, say $A$, since if $A * B$ forms an $(n+1)$-cube for some $B \in \mathcal{B}$, $\sigma(A) * \sigma(B)$ also forms an $(n+1)$-cube and will be in the equivalence class of $A * B$. 
We also only need to consider pairing $\mathcal{A}$ with $\mathcal{B}$ and not $\mathcal{B}$ with $\mathcal{A}$, since if $A * B$ is an $(n+1)$-cube then $B * A$ already lies in its equivalence class.
Once we find a pair $A*B$, we add the equivalence class of $A*B$ to an ongoing list of equivalence classes. 
Note that we only store representatives, and generate full equivalence classes on demand with the previously described method.

The one drawback is that the same class can be produced twice with different representatives, so a deduplication pass is required. 

\paragraph{Removal of degenerate cubes.}
Let $A$ be an $n$-cube in $G$. 
Suppose that for some $\sigma \in R_n$ with $\textsf{sgn}(\sigma) = -1$ that $\sigma(A) = A$. 
Then $A - (-\sigma(A)) = 2A \in \mathcal{R}_n(G)$. 
Since we are working over a field of characteristic not 2 (assuming $n \geq 1$), we have $A \in \mathcal{R}_n(G)$. 
Thus a cube invariant under some $\sigma$ of negative sign is already zero in the quotient, and may be discarded exactly as a degenerate cube is. 
We call cubes of this form \textit{semi-degenerate}.
In fact, if $A$ is degenerate in a coordinate $i$, then $\delta_i^-(A) = \delta_i^+(A)$, so $A$ is invariant under $\sigma = (\id, r)$ with $r_i = -1$ and $r_j = 1$ for $j \neq i$. 
Since the sign of this reversal is $-1$, all degenerate cubes are semi-degenerate.
Thus, we can remove all semi-degenerate cubes, and hence remove all degenerate cubes at the same time.

\paragraph{Boundary matrix computation.}
Only the coordinate dictionary changes.
The keys will be all $n$-cubes that are in some equivalence class, and the values need to be a tuple $(i,sgn)$, where $sgn$ is the sign of the element in its equivalence class.
We then get a column vector per equivalence class of maps, and use the faces of the representative $A$ to compute the boundary. 
If the value of $\delta_i^\varepsilon(A)$ is $(k,sgn)$ in the dictionary, we set $v[k]=(-1)^i*sgn$ if $\varepsilon = -$ and $v[k]=(-1)^{i+1}*sgn$ if $\varepsilon = +$.

\section{Preprocessing graphs} \label{sec:preprocessing}

Everything so far has taken the graph as given.
This section replaces it: we look for a smaller graph $G'$, usually a subgraph, with the same discrete homology, and compute there instead.
Because the cost of the computation grows so steeply in $|G_V|$, even a modest reduction is worth a great deal.
Results of this kind are known for path homology \cite{grigoryans}, but they do not transfer, since discrete and path homology are known to differ \cite{barcelo-greene-jarrah-welker:comparison}; we prove analogues directly.

We first recall the relevant notion of homotopy equivalence.

\begin{definition}
    Let $G$ and $G'$ be graphs and $f,g \colon G \to G'$ graph maps.
    A \textit{homotopy} $H \colon f \to g$ is a choice of $n \in \mathbb{N}$ and a map $H \colon G \square I_n \to G'$ such that $H(-,0) = f$ and $H(-,n) = g$.
\end{definition}

\begin{definition}
    Let $G$ and $G'$ be graphs. 
    A map $f \colon G \to G'$ is a \textit{homotopy equivalence} if there exists a map $g \colon G' \to G$ along with a pair of homotopies $gf \to \id[G]$ and $fg \to \id[G']$.
    For graphs $G$ and $G'$, if such an $f$ exists, we say $G$ and $G'$ are \textit{homotopy equivalent}.
\end{definition}

The motivation for these definitions in our context is the following theorem:

\begin{theorem}[{\cite[Thm.~3.4]{barcelo-greene-jarrah-welker:comparison}}] \label{thm:homotopy-invariance}
    If $G$ and $G'$ are homotopy equivalent graphs, and $R$ a commutative ring with unity, $\mathcal{H}_n(G;R) \cong \mathcal{H}_n(G';R)$ for all $n \in \mathbb{N}$.
\end{theorem}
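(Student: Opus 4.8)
The plan is to prove the standard homological consequences of the homotopy theory set up above: postcomposition with a graph map induces a chain map on the normalized complex $\mathcal{L}_\bullet/\mathcal{D}_\bullet$, homotopic maps induce chain-homotopic chain maps, and therefore a homotopy equivalence induces an isomorphism on homology. The main input beyond the definitions is a prism operator built from the box product.

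\textbf{Functoriality.} First I would check that a graph map $f \colon G \to G'$ induces a chain map $f_* \colon C_\bullet(G) \to C_\bullet(G')$ by postcomposition, sending a singular $k$-cube $A \colon I_1^{\square k} \to G$ to $f \circ A$. Since $\delta_i^\varepsilon(f \circ A) = f \circ \delta_i^\varepsilon A$, this carries degenerate cubes to degenerate cubes, hence descends to the quotient by $\mathcal{D}_\bullet$, and it commutes with $\partial$. Postcomposition is visibly functorial, so $(gf)_* = g_* f_*$ and $(\id[G])_* = \id$.

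\textbf{Prism operator.} The crux is the case of an elementary homotopy, i.e.\ a map $H \colon G \square I_1 \to G'$ with $H(-,0) = f$ and $H(-,1) = g$. Using the identification $I_1^{\square k} \square I_1 \cong I_1^{\square(k+1)}$ from \cref{def:box_prod}, I would define a prism operator $P \colon C_k(G) \to C_{k+1}(G')$ on a singular $k$-cube $A$ by $P(A) = H \circ (A \square \id[I_1])$, namely the $(k+1)$-cube $(x_1,\dots,x_k,t) \mapsto H(A(x_1,\dots,x_k),t)$. A direct face computation gives $\delta_i^\varepsilon P(A) = P(\delta_i^\varepsilon A)$ for $1 \le i \le k$ — which simultaneously shows that $P$ preserves degeneracies and so descends to the normalized complex — while $\delta_{k+1}^- P(A) = f \circ A = f_*(A)$ and $\delta_{k+1}^+ P(A) = g \circ A = g_*(A)$. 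Substituting into $\partial_{k+1} = \sum_{i=1}^{k+1}(-1)^i(\delta_i^- - \delta_i^+)$ yields the chain-homotopy identity
\[
\partial P(A) - P(\partial A) = (-1)^{k+1}\bigl(f_*(A) - g_*(A)\bigr).
\]
Hence for any cycle $z$, $f_*(z) - g_*(z) = (-1)^{k+1}\partial P(z)$ is a boundary, so $f_* = g_*$ on homology.

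\textbf{General homotopies and conclusion.} A homotopy $H \colon G \square I_n \to G'$ restricts to graph maps $f_j = H(-,j)$ for $0 \le j \le n$, and restricting $H$ to $G \square \{j,j+1\} \cong G \square I_1$ exhibits an elementary homotopy $f_j \to f_{j+1}$; by the previous step $(f_j)_* = (f_{j+1})_*$, so by transitivity homotopic maps induce equal maps on homology. Finally, given a homotopy equivalence $f \colon G \to G'$ with homotopy inverse $g$, functoriality together with homotopy invariance give $g_* f_* = (gf)_* = (\id[G])_* = \id$ and, symmetrically, $f_* g_* = \id$, so $f_*$ is an isomorphism $H_n(G;R) \cong H_n(G';R)$ for every $n$. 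I expect the delicate point to be the bookkeeping for the prism operator: pinning down the signs in the chain-homotopy identity and, more importantly, confirming that $P$ is compatible with passing to the normalized quotient $\mathcal{L}_\bullet/\mathcal{D}_\bullet$ (i.e.\ that it carries degenerate cubes to degenerate cubes). The reduction from general to elementary homotopies is routine once $G \square \{j,j+1\}$ is identified with $G \square I_1$.
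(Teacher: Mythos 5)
The paper offers no proof of this statement; it is quoted verbatim from the literature (the bracketed citation to Thm.~3.4 of Barcelo--Greene--Jarrah--Welker \emph{is} the paper's ``proof''). Your argument is correct and is essentially the standard prism-operator proof given in that reference: functoriality of postcomposition, the prism $P(A) = H \circ (A \square \id[I_1])$ with the face identities you state (which do descend to the normalized quotient, since degeneracy in coordinate $i \le k$ of $A$ forces degeneracy of $P(A)$ in the same coordinate), and the reduction of a length-$n$ homotopy to elementary steps.
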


So it suffices to find a smaller graph homotopy equivalent to $G$.
Recall that $N(v)$ denotes the neighborhood of $v$, and that $v \in N(v)$ under our conventions. 
\begin{definition}
    Let $G$ be a graph and $v \in G_V$. 
    We say that $v$ is \textit{homotopically removable} if there exists a vertex $w \neq v$ such that $N(v) \subseteq N(w)$.
\end{definition}
Equivalently, $v$ has a neighbor $w$ adjacent to everything $v$ is adjacent to.

As the name suggests: 
\begin{theorem} \label{TH:htpy_rem_v}
    Let $G$ be a graph and $R$ a commutative ring with unit. Suppose $v \in G_V$ is homotopically removable. 
    Define the graph $G'$ to be the subgraph of $G$ induced by $G'_V = G_V \setminus \{v\}$, i.e. the graph $G$ with the vertex $v$ removed. 
    Then $\mathcal{H}_n(G;R) \cong \mathcal{H}_n(G';R)$ for all $n \in \mathbb{N}$.
\end{theorem}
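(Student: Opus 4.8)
The plan is to exhibit a homotopy equivalence between $G$ and $G'$ and then invoke \cref{thm:homotopy-invariance}. Concretely, I would pair the inclusion $i \colon G' \hookrightarrow G$ with the retraction $r \colon G \to G'$ defined on vertices by $r(v) = w$ and $r(x) = x$ for every $x \neq v$ (this lands in $G'$ since the hypothesis guarantees $w \neq v$), and argue that $i$ and $r$ are mutually inverse up to homotopy.

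First I would verify that $r$ is a graph map. The only edges whose image needs checking are those incident to $v$: if $v \sim x$ with $x \neq v$, then $x \in N(v) \subseteq N(w)$, so $r(v) = w \sim x = r(x)$, and since $w,x \in G'_V$ this is an edge of the induced subgraph $G'$; the loop at $v$ goes to the loop at $w$, and every edge disjoint from $v$ is preserved trivially. Because $r$ fixes $G'_V$ pointwise we have $r \circ i = \id[G']$, so one of the two homotopies required for a homotopy equivalence, namely $r \circ i \to \id[G']$, is the constant one.

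The remaining and genuinely interesting step is to homotope $i \circ r$ to $\id[G]$. I would define $H \colon G \square I_1 \to G$ on vertices by $H(x,0) = r(x)$ and $H(x,1) = x$, so that $H(-,0) = i \circ r$ and $H(-,1) = \id[G]$. Checking that $H$ is a graph map reduces to inspecting the three kinds of edges of $G \square I_1$: horizontal edges at level $1$ are preserved since $H$ is the identity there; horizontal edges at level $0$ are preserved precisely because $r$ is a graph map; and each vertical edge $(x,0) \sim (x,1)$ requires $r(x) \sim x$, which is automatic for $x \neq v$ and for $x = v$ becomes the single condition $w \sim v$.

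The main obstacle — and the one place the hypothesis is used in full strength — is establishing this last adjacency $w \sim v$. Here I would use reflexivity: every graph has $v \in N(v)$, and $N(v) \subseteq N(w)$ then forces $v \in N(w)$, i.e. $w \sim v$. Once $H$ is confirmed to be a graph map, $i$ and $r$ are homotopy inverse, so $G$ and $G'$ are homotopy equivalent, and \cref{thm:homotopy-invariance} gives $H_n(G;R) \cong H_n(G';R)$ for all $n$.
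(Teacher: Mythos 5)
Your proposal is correct and follows essentially the same route as the paper: the retraction sending $v$ to $w$ and fixing everything else, the identity $r \circc i = \id[G']$ wait—, the one-step homotopy $H \colon G \square I_1 \to G$ from $i \circ r$ to $\id[G]$, and an appeal to \cref{thm:homotopy-invariance}. You are in fact slightly more careful than the paper in checking all three edge types of $G \square I_1$ and in deriving $w \sim v$ from reflexivity via $v \in N(v) \subseteq N(w)$, but the argument is the same.
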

\begin{proof}
    We construct a homotopy equivalence between $G$ and $G'$.
    Let $i \colon G' \hookrightarrow G$ be the inclusion.
    Suppose $w$ is a vertex with $N(v) \subseteq N(w)$, and define $f \colon G \to G'$ by
    \[
      f(x) = \begin{cases}
        w & \text{if } x = v, \\
        x & \text{otherwise.}
      \end{cases}
    \]
    This is a graph map since $N(v) \subseteq N(w)$. 
    We have that $f \circ i = \id[G']$. 
    It remains to construct a homotopy $H \colon i \circ f \to \id[G]$.
    Set $H \colon G \square I_1 \to G$ to be $H(-,0) = i \circ f$ and $H(-,1) = \id[G]$. 
    All we need is that this is a valid graph map, which is true if $H(x,0) \sim H(x,1)$ for all $x \in G_V$. 
    If $x \neq v$ this is clearly true since then $H(x,0) = H(x,1)$.
    If $x = v$ we have that $H(v,0) = w \sim v = H(v,1)$. 
    Thus, by \cref{thm:homotopy-invariance}, $G$ and $G'$ are homotopy equivalent and have the same homology groups.
\end{proof}

These are the \emph{dominated} vertices of \cite{barmak-et-al-dominated}, and \cref{TH:htpy_rem_v} is the discrete-homotopy analogue of the reductions underlying edge collapse for flag complexes \cite{boissonnat-siddharth-edge-collapse}.
The combinatorial condition is identical; only the invariance theorem differs.
Detecting all such vertices costs a single pass over the vertex set, so the reduction is worth attempting unconditionally; \cref{sec:experiments} reports its effect.

Vertices of degree $2$ admit a finer analysis, which we can carry out completely in degrees $0$ and $1$ and only conjecturally above.
The obstruction is reminiscent of the suspension problem for graphs, a well-known open problem in discrete homotopy theory; we do not pursue the connection here.
Throughout, a \emph{vertex of degree $2$} is one with $|N(v)| = 3$, namely $v$ itself and two others. 
Suppose $v$ is of degree 2 and connected to two other distinct vertices $x$ and $y$.
Define $\textsf{Conn}_{xy}$ to be the minimum length of a path in $G$ to get from $x$ to $y$, without going through the vertex $v$. If no such path exists, set $\textsf{Conn}_{xy} = -1$. 

For the remainder of this section, let $G$ be a graph, and $v$ a vertex of degree 2 connected to distinct vertices $x$ and $y$.
We have a few cases based on what $\textsf{Conn}_{xy}$ is: 

\begin{theorem}
    Suppose $\textsf{Conn}_{xy} = -1$, that is, there is no path from $x$ to $y$ that does not go through $v$. 
    Define $G'$ to be the subgraph of $G$ induced by removing the vertex $v$. Then $\mathcal{H}_0(G;R) \oplus R \cong \mathcal{H}_0(G';R)$ and $\mathcal{H}_1(G;R) \cong \mathcal{H}_1(G';R)$.
\end{theorem}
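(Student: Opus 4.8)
The plan is to compare the chain complexes of $G$ and $G'$ directly, using that $G'$ is the full subgraph of $G$ on $G_V \setminus \{v\}$. First I would record that discrete homology is additive over connected components: since $I_1^{\square n}$ is connected, every singular cube has connected image and hence lands in a single component, so $\mathcal{C}_\bullet(G)$ splits as the direct sum of the complexes of its components. This lets me discard all components of $G$ not containing $v$ (they are untouched by the deletion) and assume $G$ is connected. Because $v$ has degree $2$ with neighbours $x,y$ and $\textsf{Conn}_{xy} = -1$, deleting $v$ disconnects its component into exactly two pieces $H_x \ni x$ and $H_y \ni y$ --- at most two since $v$ has two neighbours, at least two since no $x$--$y$ path avoids $v$ --- and since every other vertex reaches $v$ through $x$ or $y$, we get $G' = H_x \sqcup H_y$. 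As $\mathcal{H}_0$ is the free module on connected components, this immediately yields $\mathcal{H}_0(G;R) \oplus R \cong \mathcal{H}_0(G';R)$.

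For the $\mathcal{H}_1$ statement I would use the short exact sequence of chain complexes
\[ 0 \to \mathcal{C}_\bullet(G') \to \mathcal{C}_\bullet(G) \to \mathcal{R}_\bullet \to 0, \]
where the inclusion is well defined because a non-degenerate cube of $G'$ stays non-degenerate in $G$ and its faces remain in $G'$, and the quotient $\mathcal{R}_\bullet$ is free on the non-degenerate cubes of $G$ whose image contains $v$. From the resulting long exact sequence, the desired isomorphism $\iota_* \colon \mathcal{H}_1(G';R) \cong \mathcal{H}_1(G;R)$ follows once I establish $\mathcal{H}_1(\mathcal{R}_\bullet) \cong R$ and $\mathcal{H}_2(\mathcal{R}_\bullet) = 0$. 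Indeed, the former forces the connecting map $\mathcal{H}_1(\mathcal{R}_\bullet) \to \mathcal{H}_0(G';R)$ to be a surjection onto the rank-one free kernel of $\mathcal{H}_0(G';R) \to \mathcal{H}_0(G;R)$, hence an isomorphism onto it; this kills the map $\mathcal{H}_1(G;R) \to \mathcal{H}_1(\mathcal{R}_\bullet)$ and gives surjectivity of $\iota_*$, while $\mathcal{H}_2(\mathcal{R}_\bullet) = 0$ gives injectivity. In low degrees the quotient is transparent: $\mathcal{R}_0 = R\langle v \rangle$ and $\mathcal{R}_1$ is free on the four oriented bridge edges $(x,v),(v,x),(v,y),(y,v)$, with induced boundary surjecting onto $\mathcal{R}_0$, so $\mathcal{H}_0(\mathcal{R}_\bullet) = 0$.

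The main obstacle is therefore the computation of $\mathcal{H}_1(\mathcal{R}_\bullet)$ and $\mathcal{H}_2(\mathcal{R}_\bullet)$, i.e.\ understanding every cube through $v$. The geometric leverage is that $N(v) = \{v,x,y\}$ with $x \not\sim y$, so near the preimage of $v$ each such cube factors through the tree-like path $x - v - y$; morally $\mathcal{R}_\bullet$ detects only that $v$ bridges $H_x$ and $H_y$, giving one class in degree $1$ and nothing in degree $2$. Concretely I expect this to reduce to two chain-level facts provable by exhibiting explicit $2$-cubes: that each ``back-and-forth'' cycle $(x,v)+(v,x)$ and $(v,y)+(y,v)$ bounds --- for instance the cube with corners $(x,x,v,x)$ has $\partial_2$ equal to $-(x,v)-(v,x)$ --- which underlies surjectivity of $\iota_*$; and that any $G'$-cycle bounding in $G$ already bounds in $G'$, which is the genuinely delicate direction. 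As the simplest case $G = I_2$ already illustrates, the subtlety is that there are many non-degenerate cubes through $v$ whose faces are degenerate $1$-cubes, so it is the bookkeeping, rather than the idea, where the real work lies.
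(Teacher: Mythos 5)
Your reduction to the connected component of $v$ and the $\mathcal{H}_0$ count are correct and essentially match the paper. The $\mathcal{H}_1$ argument, however, defers its entire content to the two unproven claims $\mathcal{H}_1(\mathcal{R}_\bullet) \cong R$ and $\mathcal{H}_2(\mathcal{R}_\bullet) = 0$, and these are not ``bookkeeping.'' First, the relative complex is not local to $v$: a non-degenerate $n$-cube through $v$ can have corners at distance $n$ from $v$, so already $\mathcal{R}_2$ and $\mathcal{R}_3$ (which determine $\mathcal{H}_2(\mathcal{R}_\bullet)$) involve arbitrary portions of $H_x$ and $H_y$, not just the path $x - v - y$; and discrete homology has no excision or Mayer--Vietoris theorem (the authors say exactly this right after the theorem), so there is no a priori reason the relative homology depends only on a neighbourhood of $v$. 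Second, and more tellingly, $\mathcal{H}_2(\mathcal{R}_\bullet) = 0$ fed back into your long exact sequence immediately yields surjectivity of $\mathcal{H}_2(G';R) \to \mathcal{H}_2(G;R)$, which is part of the conjecture the paper states directly after this theorem as open (``the proof remains elusive''). So the step you describe as delicate bookkeeping is at least as strong as a piece of an open problem. To salvage the strategy you would either have to genuinely carry out the classification of non-degenerate $2$- and $3$-cubes through $v$ and verify exactness of $\mathcal{R}_\bullet$ in degree $2$, or weaken your intermediate claims to exactly what the sequence needs (the map $\mathcal{H}_2(\mathcal{R}_\bullet) \to \mathcal{H}_1(G';R)$ is zero, i.e.\ every $1$-cycle of $G'$ bounding in $G$ bounds in $G'$, and $\mathcal{H}_1(G;R) \to \mathcal{H}_1(\mathcal{R}_\bullet)$ is zero) --- but the first of these is the same difficulty in different clothing, and you supply no argument for it.

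For comparison, the paper sidesteps the relative complex entirely by working one level up: it exhibits $G$ as a pushout of induced subgraphs $G_x$ and $G_y$ glued along the single vertex $v$, applies the van Kampen theorem for the discrete fundamental group $A_1$ of Kapulkin--Mavinkurve, abelianizes to get $\mathcal{H}_1(G) \cong \mathcal{H}_1(G_x) \oplus \mathcal{H}_1(G_y)$, and then notes that $v$ is homotopically removable in each of $G_x$ and $G_y$ separately (there its neighbourhood is contained in that of $x$, resp.\ $y$), reducing to $G' = G'_x \sqcup G'_y$. That route is confined to degree $1$ by design --- which is precisely why the higher-degree statement remains conjectural --- whereas your route, if completed, would prove strictly more. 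That is a strong signal that the missing computation cannot be supplied by routine means, and as it stands the proposal has a genuine gap at its central step.
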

\begin{proof}
    Since $\mathcal{H}_0(G;R) \cong \bigoplus_{i \in I} R$, where $I$ is the set of connected components of $G$, and the vertices $x$ and $y$ are not connected except through $v$, removing $v$ causes $G'$ to have one more connected component than $G$, so $\mathcal{H}_0(G;R) \oplus R \cong \mathcal{H}_0(G';R)$.

For the claim about $\mathcal{H}_1$, we use the van Kampen theorem for graphs, found in \cite[Thm.~5.5]{kapulkin-mavinkurve}.
    We begin by defining two subgraphs of $G$: $G_x$ and $G_y$.
    By assumption, $x$ and $y$ belong to different connected components of $G'$.
    The graph $G_x$ is the induced subgraph of $G$ containing the vertex $v$ and all connected components of $G'$ except the connected component of $y$.
    Let $G_y$ be the induced subgraph of $G$ containing the vertex $v$ and the connected component of $y$ in $G'$.
    We have a pushout: 
    \begin{center}
\begin{tikzcd}
I_0 \arrow[d, "v"'] \arrow[r, "v"] & G_x \arrow[d, hook] \\
G_y \arrow[r, hook]                & G                  
\end{tikzcd}
\end{center}
Where the maps $v \colon I_0 \to G_x$ and $v \colon I_0 \to G_y$. 
Write $A_1$ for the discrete fundamental group of a pointed graph.
Every map $C_4 \to G$ whose image is a $3$-cycle or a $4$-cycle factors through $G_x \hookrightarrow G$ or through $G_y \hookrightarrow G$, so the van Kampen theorem \cite[Thm.~5.5]{kapulkin-mavinkurve} applies.
As the pushout is taken over the trivial group, it gives the free product $A_1(G) \cong A_1(G_x) * A_1(G_y)$.
Since $\mathcal{H}_1$ is the abelianization of $A_1$ \cite[Thm.~4.1]{barcelo-capraro-white}, and abelianization carries free products to direct sums, $\mathcal{H}_1(G) \cong \mathcal{H}_1(G_x) \oplus \mathcal{H}_1(G_y)$.

Let $G'_x$ and $G'_y$ denote $G_x$ and $G_y$ with the vertex $v$ removed.
In each of $G_x$ and $G_y$ the vertex $v$ is homotopically removable, so $\mathcal{H}_1(G_x) \cong \mathcal{H}_1(G'_x)$ and $\mathcal{H}_1(G_y) \cong \mathcal{H}_1(G'_y)$ by \cref{TH:htpy_rem_v}.
Moreover $G' = G'_x \sqcup G'_y$, and discrete homology preserves coproducts, so
\[\mathcal{H}_1(G) \cong \mathcal{H}_1(G_x) \oplus \mathcal{H}_1(G_y) \cong \mathcal{H}_1(G'_x) \oplus \mathcal{H}_1(G'_y) \cong \mathcal{H}_1(G')\text{,}\]
which completes the proof.
\end{proof}

We believe the result for $\mathcal{H}_1$ to be true for all higher homology groups, but the proof remains elusive: 

\begin{conjecture}
    Suppose $\textsf{Conn}_{xy} = -1$, that is, there is no path from $x$ to $y$ that does not go through $v$. 
    Define $G'$ to be the subgraph of $G$ induced by removing the vertex $v$. Then $\mathcal{H}_0(G;R) \oplus R \cong \mathcal{H}_0(G';R)$ and $\mathcal{H}_n(G;R) \cong \mathcal{H}_n(G';R)$ for all $n \geq 1$.
\end{conjecture}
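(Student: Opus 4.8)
The plan is to upgrade the van Kampen argument used for $\mathcal{H}_1$ into a Mayer--Vietoris argument valid in all degrees, isolating precisely the one input that is currently missing. As in the proof of the preceding theorem, write $G = G_x \cup G_y$ as a union of induced subgraphs, where $G_x$ consists of $v$ together with every connected component of $G'$ other than that of $y$, and $G_y$ consists of $v$ together with the component of $y$. Since $\textsf{Conn}_{xy} = -1$, the vertex $v$ is a cut vertex, so $G_x \cap G_y = \{v\}$ and $G' = G'_x \sqcup G'_y$ with $G'_x = G_x \setminus \{v\}$ and $G'_y = G_y \setminus \{v\}$.

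First I would record the purely homological part. Let $K_\bullet = C_\bullet(G_x) + C_\bullet(G_y) \subseteq C_\bullet(G)$ be the subcomplex spanned by those singular cubes whose image lies entirely in $G_x$ or in $G_y$. A cube lying in both has image in $G_x \cap G_y = \{v\}$, so $C_\bullet(G_x) \cap C_\bullet(G_y) = C_\bullet(\{v\})$, and the inclusion--exclusion short exact sequence
\[
0 \to C_\bullet(\{v\}) \to C_\bullet(G_x) \oplus C_\bullet(G_y) \to K_\bullet \to 0
\]
yields a Mayer--Vietoris long exact sequence. Because $\mathcal{H}_n(\{v\};R) = 0$ for $n \geq 1$, this gives $H_n(K_\bullet) \cong \mathcal{H}_n(G_x;R) \oplus \mathcal{H}_n(G_y;R)$ for all $n \geq 1$, with the familiar rank-one correction in degree $0$ accounting for the extra connected component (exactly the statement $\mathcal{H}_0(G;R) \oplus R = \mathcal{H}_0(G';R)$). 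I would then feed in the two geometric inputs already at hand: within $G_x$ the vertex $v$ is adjacent only to $x$, so $N_{G_x}(v) = \{v,x\} \subseteq N_{G_x}(x)$ and $v$ is homotopically removable, whence $\mathcal{H}_n(G_x;R) \cong \mathcal{H}_n(G'_x;R)$ by \cref{TH:htpy_rem_v}, and symmetrically for $G_y$. Since discrete homology preserves coproducts and $G' = G'_x \sqcup G'_y$, we get $\mathcal{H}_n(G';R) \cong \mathcal{H}_n(G'_x;R) \oplus \mathcal{H}_n(G'_y;R)$. Combining these identifies $H_n(K_\bullet)$ with $\mathcal{H}_n(G';R)$ in every degree $n \geq 1$.

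The entire content of the conjecture therefore reduces to showing that the inclusion $K_\bullet \hookrightarrow C_\bullet(G)$ is a quasi-isomorphism, equivalently that the quotient complex $Q_\bullet = C_\bullet(G)/K_\bullet$ --- freely generated by the non-degenerate \emph{spanning} cubes, those whose image meets both $G'_x$ and $G'_y$ and hence passes through $v$ --- is acyclic. This is the main obstacle, and it is a discrete analogue of the \textbf{excision} axiom: unlike the singular homology of spaces, discrete cubical homology admits no subdivision operator shrinking a cube into a chosen cover, which is precisely the phenomenon underlying the suspension problem flagged earlier and the reason the van Kampen route closes off after degree $1$. To attack it I would try to build an explicit contracting homotopy on $Q_\bullet$ exploiting the rigidity of the degree-$2$ vertex $v$: every spanning cube must route through $v$ along the length-two path $x - v - y$, so one hopes such cubes are forced to be ``prismatic'' and can be slid off of one side. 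A promising concrete lever is the hyperoctahedral symmetry of \cref{sec:quotient}: the simplest spanning cube $[x,v,v,y]$ is already fixed up to sign by the coordinate swap in $R_2$, hence semi-degenerate and null-homologous over $\mathbb{Q}$, and I would investigate whether every spanning cube is annihilated by the associated reversal--symmetry relations, which would make $Q_\bullet$ acyclic and complete the proof.
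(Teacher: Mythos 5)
This statement is posed in the paper as a conjecture precisely because no proof is known; the paper offers no argument beyond the degree-$0$ and degree-$1$ cases established in the preceding theorem, and it explicitly attributes the obstruction to the absence of a Mayer--Vietoris (excision) theorem in discrete homotopy theory. Your proposal does not close that gap. The first two paragraphs of your argument are sound and essentially reproduce, in chain-level language, the strategy the paper already uses for $\mathcal{H}_1$: the decomposition $G = G_x \cup G_y$ with $G_x \cap G_y = \{v\}$, the identification $H_n\bigl(C_\bullet(G_x) + C_\bullet(G_y)\bigr) \cong \mathcal{H}_n(G_x;R) \oplus \mathcal{H}_n(G_y;R)$ for $n \geq 1$, the removability of $v$ in each piece via \cref{TH:htpy_rem_v}, and the coproduct decomposition of $G'$. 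But the entire mathematical content of the conjecture is the step you defer to the end: that the inclusion $K_\bullet \hookrightarrow C_\bullet(G)$ is a quasi-isomorphism, i.e.\ that the quotient complex generated by spanning cubes is acyclic. You correctly diagnose this as a discrete excision problem with no available subdivision operator, but you offer only a program (``I would try to build an explicit contracting homotopy\ldots'', ``I would investigate whether\ldots''), not an argument. A reduction of an open conjecture to its known hard kernel is not a proof of it.

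Two further cautions about the lever you propose. First, the semi-degeneracy mechanism of \cref{sec:quotient} discards a cube $A$ satisfying $\sigma(A) = A$ with $\mathsf{sgn}(\sigma) = -1$ only because $2A \in D_n(G)$ forces $A \in D_n(G)$ \emph{after inverting} $2$; the conjecture as stated is over an arbitrary commutative ring $R$, so an argument routed through semi-degeneracy could at best establish the characteristic-zero case. Second, even over $\mathbb{Q}$ it is not true that every spanning cube is symmetric in the way your example $[x,v,v,y]$ is: a general non-degenerate $n$-cube whose image meets both $G'_x$ and $G'_y$ need only contain $v$ somewhere in its image, and there is no evident element of $R_n$ fixing it with negative sign. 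Showing that the spanning cubes nonetheless contribute nothing to homology is exactly the higher-dimensional van Kampen/Mayer--Vietoris statement the paper identifies as out of reach, so the key idea is still missing.
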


For the case of $\mathsf{Conn}_{xy} = 1$ there is an edge between $x$ and $y$, so the vertex $v$ is homotopically removable, since $N(v) \subseteq N(x)$. 
For $n=2$, we can only offer a conjecture:
\begin{conjecture}
    Suppose $\textsf{Conn}_{xy} = 2$.
    Define $G'$ to be the subgraph of $G$ induced by removing the vertex $v$. Then $\mathcal{H}_n(G;R) \cong \mathcal{H}_n(G';R)$ for all $n \geq 0$.
\end{conjecture}

The final case is $\mathsf{Conn}_{xy} \geq 3$ with the following theorem: 
\begin{theorem}
    Suppose $\textsf{Conn}_{xy} \geq 3$.
    Define $G'$ to be the subgraph of $G$ induced by removing the vertex $v$. Then $\mathcal{H}_0(G;R) \cong \mathcal{H}_0(G';R)$ and $\mathcal{H}_1(G;R) \cong \mathcal{H}_1(G';R) \oplus R$.
\end{theorem}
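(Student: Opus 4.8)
The plan is to mirror the previous theorem and argue via the van Kampen theorem for graphs \cite[Thm.~5.5]{kapulkin-mavinkurve}, together with the identification of $\mathcal{H}_1$ with the abelianized discrete fundamental group from \cite[Thm.~4.1]{barcelo-capraro-white}. Since discrete homology sends disjoint unions to direct sums, I would first reduce to the case where $G$ is connected: the vertex $v$ lies in a single connected component, and removing it affects only that component, so the general statement follows from the connected one. For the $\mathcal{H}_0$ claim, the hypothesis $\textsf{Conn}_{xy} \geq 3$ guarantees a path from $x$ to $y$ avoiding $v$, so $v$ is not a cut vertex; hence $G'$ has the same connected components as $G$ and $\mathcal{H}_0(G;R) \cong \mathcal{H}_0(G';R)$ (in the connected reduction both sides are just $R$).

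For $\mathcal{H}_1$, I would exhibit $G$ as a pushout. Let $\gamma$ be a shortest path from $x$ to $y$ in $G'$; by hypothesis it has length $m = \textsf{Conn}_{xy} \geq 3$ and avoids $v$. Let $G_2$ be the (non-induced) cycle subgraph consisting of $\gamma$ together with the two edges $v \sim x$ and $v \sim y$; since $v$ has degree $2$, this is a copy of $C_{m+2}$ with $m+2 \geq 5$. Taking $G_1 = G'$, one checks directly that $G_1 \cap G_2 = \gamma$ and that $G$ is the pushout of $G_1 \hookleftarrow \gamma \hookrightarrow G_2$ in the category of graphs: the vertex sets glue to $G_V$, and the edge sets glue to all of $G_E$, since every edge either avoids $v$ (landing in $G_1$) or is one of the two $v$-edges (lying in $G_2$).

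To invoke \cite[Thm.~5.5]{kapulkin-mavinkurve} I would verify its hypothesis that every map $C_4 \to G$ with image a $3$- or $4$-cycle factors through $G_1$ or $G_2$. A short cycle through $v$ would force a path of length at most $2$ from $x$ to $y$ avoiding $v$, contradicting $\textsf{Conn}_{xy} \geq 3$; hence no $3$- or $4$-cycle meets $v$, so every such cycle lies in $G_1 = G'$. The van Kampen theorem then yields $A_1(G) \cong A_1(G') *_{A_1(\gamma)} A_1(G_2)$. Since $\gamma \cong I_m$ is contractible, $A_1(\gamma)$ is trivial and the amalgamation collapses to an ordinary free product; and $A_1(C_{m+2}) \cong \mathbb{Z}$ because $m+2 \geq 5$, a standard computation consistent with $\mathcal{H}_1(C_5) = \mathbb{Q}$. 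Thus $A_1(G) \cong A_1(G') * \mathbb{Z}$.

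Finally, abelianizing and applying \cite[Thm.~4.1]{barcelo-capraro-white} turns the free product into a direct sum, giving $\mathcal{H}_1(G;\mathbb{Z}) \cong \mathcal{H}_1(G';\mathbb{Z}) \oplus \mathbb{Z}$; tensoring with $R$ (and noting that $\mathcal{H}_0(G;\mathbb{Z})$ is free, so the universal-coefficient $\mathrm{Tor}_1^{\mathbb{Z}}$ term vanishes) upgrades this to $\mathcal{H}_1(G;R) \cong \mathcal{H}_1(G';R) \oplus R$. I expect the main obstacle to be confirming that Thm.~5.5 genuinely applies to this pushout — precisely, that its cycle-factoring hypothesis is exactly what $\textsf{Conn}_{xy} \geq 3$ supplies and that the contractibility of $\gamma$ reduces the amalgamated product to an honest free product; the identification $A_1(C_{m+2}) \cong \mathbb{Z}$ and the passage from $\mathbb{Z}$- to $R$-coefficients are then routine.
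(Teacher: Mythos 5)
Your argument is essentially identical to the paper's: the same pushout $G' \hookleftarrow I_m \hookrightarrow C_{m+2}$ (with $m = \textsf{Conn}_{xy}$ and the cycle formed by the shortest $x$--$y$ path together with $v$), the same appeal to the van Kampen theorem of \cite[Thm.~5.5]{kapulkin-mavinkurve}, and the same passage to $\mathcal{H}_1$ via abelianization. You additionally spell out the verification of the $3$-/$4$-cycle factoring hypothesis (correctly, using that $v$ has degree $2$ and $\textsf{Conn}_{xy}\geq 3$), which the paper leaves implicit; the proof is correct.
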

\begin{proof}
    For $\mathcal{H}_0$, note that removing $v$ does not affect the number of connected components. 
    For $\mathcal{H}_1$, we use another van Kampen argument. 
    To this end, let $m = \mathsf{Conn}_{xy}$ and let $I_{m} \to G'$ be a path between $x$ and $y$.
    Define a map $I_m \to C_{m+2}$ to map a vertex $i \in (I_m)_V$ to $i \in (C_{m+2})_V$.
    (As in \cref{ex:graphs}, the vertices of $I_m$ are labelled $0$, $1$, \ldots, $m$, and the vertices of $C_{m+2}$ are labelled $0$, $1$, \ldots, $m+1$.) 
    Finally, define the inclusion $C_{m+2} \hookrightarrow G$ to map vertices $0$ to $m$ to the path from $x$ to $y$, and $m+1$ to $v$. 
    We then have the pushout: 
    \begin{center}
    \begin{tikzcd}
    I_m \arrow[d] \arrow[r] & G' \arrow[d, hook] \\
    C_{m+2} \arrow[r, hook]                & G                  
    \end{tikzcd}
\end{center}
So by the van Kampen theorem \cite[Thm.~5.5]{kapulkin-mavinkurve}, this pushout is preserved by the discrete fundamental group, and therefore by its abelianization, yielding $\mathcal{H}_1(G;R) \cong \mathcal{H}_1(G';R) \oplus R$, since $\mathcal{H}_1(C_{m+2}; R) \cong R$.
\end{proof}

Again, we have a conjecture regarding the general case which has yet to be proven.

\begin{conjecture}
    Suppose $\textsf{Conn}_{xy} \geq 3$.
    Define $G'$ to be the subgraph of $G$ induced by removing the vertex $v$. Then $\mathcal{H}_n(G;R) \cong \mathcal{H}_n(G';R)$ for $n \geq 2$.
\end{conjecture}

One approach towards the proof of this conjecture would require a higher-dimensional version of the van Kampen theorem for homology, i.e., the Mayer--Vietoris sequence.
At present, such a theorem has not been shown in the context of discrete homotopy theory, however a restricted version, insufficient for our purposes, was considered in \cite[Proof of Thm.~5.2]{barcelo-greene-jarrah-welker:vanishing}.

\section{Experimental evaluation} \label{sec:experiments}

This section reports the effect of each improvement on the running time.
We compare four implementations: the naive algorithm of \cref{sec:naive}; the
algorithm with the inductive generation of cubes of \cref{sec:generation} together
with the improvements of \cref{sec:degeneracies}, which we call the
\emph{non-quotient} algorithm; the same with the quotient of \cref{sec:quotient},
which we call the \emph{quotient} algorithm; and finally a parallel implementation
of the last.
The naive and non-quotient algorithm both compute over $\mathbb{F}_2$, and the quotient algorithm computes over $\mathbb{F}_p$, where $p$ is the smallest prime larger than $n+1$.

Our code is available at
\url{https://github.com/nkershaw01/Discrete_Cubical_Homology}.
It is written in Julia and was run on the Digital Research Alliance of Canada
Narval cluster on a single node with 240~GB of memory and, for the parallel
implementation, 64 threads.
The test graphs are $C_5$, the Greene sphere $G^{sph}$, the discrete $3$-torus
$T^3$, the star-shaped $C_5^{star}$, and the complete graph $K_{10}$, all
introduced in \cref{sec:homology}.
We do not apply preprocessing in these tests, since on this set it is either
decisive or trivial: $K_{10}$ collapses to a point and $C_5^{star}$ to $C_5$,
while the remaining graphs have no dominated vertices.
Its effect is discussed separately at the end of the section.

Where a computation did not terminate in the available time, we report an estimate,
obtained by sampling and extrapolating, and mark it with an asterisk.
Estimates should be read as orders of magnitude rather than as predictions.

\subsection*{Running times}

\cref{fig:times} summarizes all four implementations across all five graphs, and
\cref{tab:times} gives the underlying numbers.

\begin{figure}[htbp]
  \centering
  \begin{tikzpicture}
  \begin{groupplot}[
      group style={group size=3 by 2, horizontal sep=1.6cm, vertical sep=1.6cm,
                   xlabels at=edge bottom, xticklabels at=edge bottom,
                   ylabels at=edge left},
      width=0.36\textwidth, height=4.2cm,
      ymode=log, xmin=-0.3, xmax=4.3, xtick={0,1,2,3,4},
      xlabel={degree $n$}, ylabel={seconds},
      label style={font=\scriptsize}, tick label style={font=\tiny},
      title style={font=\small}, ymajorgrids, log basis y=10,
    ]
    \nextgroupplot[title={$C_5$}, ytick={1e-4,1e0,1e4,1e8,1e12,1e16}]
  \addplot[nvcol,mark=*,thick] coordinates {(0,0.000158) (1,0.00607) (2,2.75)};
  \addplot[nvcol,mark=*,thick,dashed,mark options={fill=white}] coordinates {(2,2.75) (3,500000) (4,9e+16)};
  \addplot[ngcol,mark=square*,thick] coordinates {(0,4.88e-05) (1,0.000593) (2,0.0201) (3,21.1)};
  \addplot[ngcol,mark=square*,thick,dashed,mark options={fill=white}] coordinates {(3,21.1) (4,1e+08)};
  \addplot[nqcol,mark=triangle*,thick] coordinates {(0,7.76e-05) (1,0.000385) (2,0.00967) (3,5.47)};
  \addplot[nqcol,mark=triangle*,thick,dashed,mark options={fill=white}] coordinates {(3,5.47) (4,1e+07)};
  \addplot[npcol,mark=diamond*,thick] coordinates {(0,0.000638) (1,0.00171) (2,0.00447) (3,0.637) (4,141000)};
    \nextgroupplot[title={$G^{sph}$}, ytick={1e-4,1e2,1e8,1e14,1e20,1e26}]
  \addplot[nvcol,mark=*,thick] coordinates {(0,0.000574) (1,0.0608) (2,440)};
  \addplot[nvcol,mark=*,thick,dashed,mark options={fill=white}] coordinates {(2,440) (3,3e+10) (4,3e+26)};
  \addplot[ngcol,mark=square*,thick] coordinates {(0,0.00017) (1,0.00214) (2,0.237)};
  \addplot[ngcol,mark=square*,thick,dashed,mark options={fill=white}] coordinates {(2,0.237) (3,700000) (4,3e+11)};
  \addplot[nqcol,mark=triangle*,thick] coordinates {(0,0.000148) (1,0.00115) (2,0.072) (3,123)};
  \addplot[nqcol,mark=triangle*,thick,dashed,mark options={fill=white}] coordinates {(3,123) (4,2e+09)};
  \addplot[npcol,mark=diamond*,thick] coordinates {(0,0.000738) (1,0.00191) (2,0.0132) (3,13.8)};
    \nextgroupplot[title={$T^3$}, ytick={1e-3,1e10,1e23,1e36,1e49,1e62}]
  \addplot[nvcol,mark=*,thick] coordinates {(0,1.4) (1,20300)};
  \addplot[nvcol,mark=*,thick,dashed,mark options={fill=white}] coordinates {(1,20300) (2,2e+12) (3,1e+29) (4,4e+62)};
  \addplot[ngcol,mark=square*,thick] coordinates {(0,0.00415) (1,0.124) (2,38.7)};
  \addplot[ngcol,mark=square*,thick,dashed,mark options={fill=white}] coordinates {(2,38.7) (3,4e+06) (4,7e+19)};
  \addplot[nqcol,mark=triangle*,thick] coordinates {(0,0.00509) (1,0.142) (2,30.2)};
  \addplot[nqcol,mark=triangle*,thick,dashed,mark options={fill=white}] coordinates {(2,30.2) (3,1e+06) (4,8e+17)};
  \addplot[npcol,mark=diamond*,thick] coordinates {(0,0.00253) (1,0.0198) (2,1.83) (3,8140)};
    \nextgroupplot[title={$C_5^{star}$}, ytick={1e-4,1e2,1e8,1e14,1e20,1e26}]
  \addplot[nvcol,mark=*,thick] coordinates {(0,0.000987) (1,0.0991) (2,936)};
  \addplot[nvcol,mark=*,thick,dashed,mark options={fill=white}] coordinates {(2,936) (3,4e+10) (4,5e+26)};
  \addplot[ngcol,mark=square*,thick] coordinates {(0,0.000149) (1,0.00218) (2,0.532)};
  \addplot[ngcol,mark=square*,thick,dashed,mark options={fill=white}] coordinates {(2,0.532) (3,4e+06) (4,2e+17)};
  \addplot[nqcol,mark=triangle*,thick] coordinates {(0,0.000157) (1,0.00147) (2,0.14) (3,1510)};
  \addplot[nqcol,mark=triangle*,thick,dashed,mark options={fill=white}] coordinates {(3,1510) (4,1e+13)};
  \addplot[npcol,mark=diamond*,thick] coordinates {(0,0.000793) (1,0.00211) (2,0.0292) (3,176)};
    \nextgroupplot[title={$K_{10}$}, ytick={1e-4,1e6,1e16,1e26,1e36,1e46}]
  \addplot[nvcol,mark=*,thick] coordinates {(0,0.00152) (1,0.811)};
  \addplot[nvcol,mark=*,thick,dashed,mark options={fill=white}] coordinates {(1,0.811) (2,100000) (3,3e+16) (4,2e+46)};
  \addplot[ngcol,mark=square*,thick] coordinates {(0,0.000377) (1,0.0485)};
  \addplot[ngcol,mark=square*,thick,dashed,mark options={fill=white}] coordinates {(1,0.0485) (2,50000) (3,3e+16) (4,2e+46)};
  \addplot[nqcol,mark=triangle*,thick] coordinates {(0,0.000312) (1,0.0215)};
  \addplot[nqcol,mark=triangle*,thick,dashed,mark options={fill=white}] coordinates {(1,0.0215) (2,20000) (3,1e+11) (4,1e+30)};
  \addplot[npcol,mark=diamond*,thick] coordinates {(0,0.000763) (1,0.0102) (2,51.9)};
  \nextgroupplot[hide axis, xmin=0, xmax=1, ymin=1, ymax=10,
      legend style={at={(0.5,0.5)}, anchor=center, draw=none, fill=none,
                    font=\small, cells={anchor=west}, legend columns=1}]
    \addlegendimage{nvcol,mark=*,thick}\addlegendentry{naive}
    \addlegendimage{ngcol,mark=square*,thick}\addlegendentry{non-quotient}
    \addlegendimage{nqcol,mark=triangle*,thick}\addlegendentry{quotient}
    \addlegendimage{npcol,mark=diamond*,thick}\addlegendentry{parallel quotient}
    \addlegendimage{black,thick,dashed,mark=none}\addlegendentry{extrapolated}
  \end{groupplot}
  \end{tikzpicture}
  \caption{Time to compute $\mathcal{H}_n$ for each implementation.
    Note the logarithmic vertical scale, which spans more than sixty orders of
    magnitude for $T^3$.
    Solid segments are measured; dashed segments are extrapolated.}
  \label{fig:times}
\end{figure}

\renewcommand{\arraystretch}{1.15}
\begin{table}[htbp]
  \centering
  \begin{threeparttable}
  \begin{tabular}{llccccc}
    \toprule
    \textbf{Graph} & \textbf{Implementation} & $\mathcal{H}_0$ & $\mathcal{H}_1$ & $\mathcal{H}_2$ & $\mathcal{H}_3$ & $\mathcal{H}_4$ \\
    \midrule
    $C_5$ & naive          & $1.58 \times 10^{-4}$ & $6.07 \times 10^{-3}$ & $2.75$ & $5 \times 10^{5}$\tnote{*} & $9 \times 10^{16}$\tnote{*} \\
          & non-quotient   & $4.88 \times 10^{-5}$ & $5.93 \times 10^{-4}$ & $2.01 \times 10^{-2}$ & $21.1$ & $1 \times 10^{8}$\tnote{*} \\
          & quotient       & $7.76 \times 10^{-5}$ & $3.85 \times 10^{-4}$ & $9.67 \times 10^{-3}$ & $5.47$ & $1 \times 10^{7}$\tnote{*} \\
          & parallel       & $6.38 \times 10^{-4}$ & $1.71 \times 10^{-3}$ & $4.47 \times 10^{-3}$ & $0.637$ & $1.41 \times 10^{5}$ \\
    \midrule
    $G^{sph}$ & naive        & $5.74 \times 10^{-4}$ & $6.08 \times 10^{-2}$ & $440$ & $3 \times 10^{10}$\tnote{*} & $3 \times 10^{26}$\tnote{*} \\
              & non-quotient & $1.70 \times 10^{-4}$ & $2.14 \times 10^{-3}$ & $0.237$ & $7 \times 10^{5}$\tnote{*} & $3 \times 10^{11}$\tnote{*} \\
              & quotient     & $1.48 \times 10^{-4}$ & $1.15 \times 10^{-3}$ & $7.20 \times 10^{-2}$ & $123$ & $2 \times 10^{9}$\tnote{*} \\
              & parallel     & $7.38 \times 10^{-4}$ & $1.91 \times 10^{-3}$ & $1.32 \times 10^{-2}$ & $13.8$ & --- \\
    \midrule
    $T^3$ & naive        & $1.40$ & $2.03 \times 10^{4}$ & $2 \times 10^{12}$\tnote{*} & $1 \times 10^{29}$\tnote{*} & $4 \times 10^{62}$\tnote{*} \\
          & non-quotient & $4.15 \times 10^{-3}$ & $0.124$ & $38.7$ & $4 \times 10^{6}$\tnote{*} & $7 \times 10^{19}$\tnote{*} \\
          & quotient     & $5.09 \times 10^{-3}$ & $0.142$ & $30.2$ & $1 \times 10^{6}$\tnote{*} & $8 \times 10^{17}$\tnote{*} \\
          & parallel     & $2.53 \times 10^{-3}$ & $1.98 \times 10^{-2}$ & $1.83$ & $8.14 \times 10^{3}$ & --- \\
    \midrule
    $C_5^{star}$ & naive        & $9.87 \times 10^{-4}$ & $9.91 \times 10^{-2}$ & $936$ & $4 \times 10^{10}$\tnote{*} & $5 \times 10^{26}$\tnote{*} \\
                 & non-quotient & $1.49 \times 10^{-4}$ & $2.18 \times 10^{-3}$ & $0.532$ & $4 \times 10^{6}$\tnote{*} & $2 \times 10^{17}$\tnote{*} \\
                 & quotient     & $1.57 \times 10^{-4}$ & $1.47 \times 10^{-3}$ & $0.140$ & $1.51 \times 10^{3}$ & $1 \times 10^{13}$\tnote{*} \\
                 & parallel     & $7.93 \times 10^{-4}$ & $2.11 \times 10^{-3}$ & $2.92 \times 10^{-2}$ & $176$ & --- \\
    \midrule
    $K_{10}$ & naive        & $1.52 \times 10^{-3}$ & $0.811$ & $1 \times 10^{5}$\tnote{*} & $3 \times 10^{16}$\tnote{*} & $2 \times 10^{46}$\tnote{*} \\
             & non-quotient & $3.77 \times 10^{-4}$ & $4.85 \times 10^{-2}$ & $5 \times 10^{4}$\tnote{*} & $3 \times 10^{16}$\tnote{*} & $2 \times 10^{46}$\tnote{*} \\
             & quotient     & $3.12 \times 10^{-4}$ & $2.15 \times 10^{-2}$ & $2 \times 10^{4}$\tnote{*} & $1 \times 10^{11}$\tnote{*} & $1 \times 10^{30}$\tnote{*} \\
             & parallel     & $7.63 \times 10^{-4}$ & $1.02 \times 10^{-2}$ & $51.9$ & --- & --- \\
    \bottomrule
  \end{tabular}
  \begin{tablenotes}
    \item[*] estimated. \quad --- : not attempted.
  \end{tablenotes}
  \end{threeparttable}
  \caption{Time in seconds to compute $\mathcal{H}_n$.}
  \label{tab:times}
\end{table}

The naive algorithm reaches degree $2$ on the three smallest graphs and nothing
beyond; even $\mathcal{H}_2(T^3)$ is out of reach, at an estimated
$2 \times 10^{12}$ seconds, or some sixty thousand years.
The inductive generation of cubes is responsible for the largest single gain: it
reduces the degree-$2$ times by factors between $137$ and $1.86 \times 10^3$ on
$C_5$, $G^{sph}$, and $C_5^{star}$, and brings $\mathcal{H}_2(T^3)$ down from that
estimate to $38.7$ seconds.
$K_{10}$ is the exception throughout, and for the expected reason: on a complete
graph every set map is a graph map, so there is nothing for the generation step to
prune.

The quotient by the hyperoctahedral group behaves as the size $n! \cdot 2^n$ of the
group suggests, contributing little in low degrees and increasingly much in high
ones.
In degree $2$ it gives factors between $1.28$ and $3.8$; in degree $3$ it takes
$G^{sph}$ from an estimated $7 \times 10^{5}$ seconds to a measured $123$, and
$C_5^{star}$ from an estimated $4 \times 10^{6}$ to a measured
$1.51 \times 10^{3}$.
In other words, it is the step that converts two of our degree-$3$ estimates into
actual computations.

Parallelization has real overhead, and we report it rather than suppress it.
In degree $0$, and for three of the five graphs in degree $1$, the parallel
implementation is \emph{slower} than the sequential one, by factors of up to $8$;
the work per thread is simply too small to repay the cost of distributing it.
From degree $2$ onward it pays, with factors between $2.2$ and $16.5$ in degree
$2$ and around $8.6$ to $9$ in degree $3$.
Its main effect is at the frontier: $\mathcal{H}_4(C_5)$ becomes computable, in
$1.41 \times 10^{5}$ seconds, or roughly $39$ hours.

\subsection*{Where the time goes}

The rank computations are not the bottleneck at any stage.
\cref{tab:setup} splits the running time into the construction of the boundary
matrices and the computation of their ranks.
The times do not sum exactly to those of \cref{tab:times}, since they were obtained
on separate runs.

\begin{table}[htbp]
  \centering
  \begin{tabular}{lcccccc}
    \toprule
    & \multicolumn{2}{c}{naive} & \multicolumn{2}{c}{non-quotient} & \multicolumn{2}{c}{quotient} \\
    \cmidrule(lr){2-3}\cmidrule(lr){4-5}\cmidrule(lr){6-7}
    \textbf{Group} & setup & rank & setup & rank & setup & rank \\
    \midrule
    $\mathcal{H}_2(C_5)$     & $2.58$ & $3.74 \times 10^{-3}$ & $1.83 \times 10^{-2}$ & $2.80 \times 10^{-3}$ & $8.75 \times 10^{-3}$ & $9.47 \times 10^{-8}$ \\
    $\mathcal{H}_2(G^{sph})$ & $443$  & $0.226$               & $0.229$               & $4.81 \times 10^{-2}$ & $8.42 \times 10^{-2}$ & $5.68 \times 10^{-5}$ \\
    $\mathcal{H}_2(T^3)$     & ---    & ---                   & $29.3$                & $9.37$                & $30.6$                & $4.36 \times 10^{-3}$ \\
    $\mathcal{H}_3(C_5)$     & ---    & ---                   & $15.0$                & $8.80$                & $4.77$                & $6.51 \times 10^{-3}$ \\
    \bottomrule
  \end{tabular}
  \caption{Time in seconds spent building the boundary matrices versus computing
    their ranks.}
  \label{tab:setup}
\end{table}

Under the naive algorithm the setup dominates by three orders of magnitude, which
is what motivated \cref{sec:generation,sec:degeneracies}.
Once those improvements are in place the ranks become a visible share of the
total --- a third of the time for $\mathcal{H}_2(T^3)$ and $\mathcal{H}_3(C_5)$ ---
and one might expect rank computation to become the next bottleneck.
The quotient removes it again, and by a much larger margin than it reduces the
setup time: for $\mathcal{H}_2(T^3)$ the setup is essentially unchanged while the
rank computation falls by a factor of over $2000$.
The reason is that the quotient discards semi-degenerate cubes, which include all
degenerate ones, so the matrices it produces are not merely smaller but also
better conditioned for elimination.
This is why we did not pursue a parallel rank computation.

\subsection*{Overall improvement}

\cref{tab:speedups} gives the ratio of the naive time to the parallel quotient time.

\renewcommand{\arraystretch}{1.15}
\begin{table}[htbp]
  \centering
  \begin{threeparttable}
  \begin{tabular}{lccccc}
    \toprule
    \textbf{Graph} & $\mathcal{H}_0$ & $\mathcal{H}_1$ & $\mathcal{H}_2$ & $\mathcal{H}_3$ & $\mathcal{H}_4$ \\
    \midrule
    $C_5$ & $0.25$ & $3.55$ & $615$ & $7.85 \times 10^{5}$\tnote{*} & $6.38 \times 10^{11}$\tnote{*} \\
    $G^{sph}$ & $0.78$ & $31.8$ & $3.33 \times 10^{4}$ & $2.17 \times 10^{9}$\tnote{*} & --- \\
    $T^3$ & $553$ & $1.03 \times 10^{6}$ & $1.09 \times 10^{12}$\tnote{*} & $1.23 \times 10^{25}$\tnote{*} & --- \\
    $C_5^{star}$ & $1.24$ & $47.0$ & $3.21 \times 10^{4}$ & $2.27 \times 10^{8}$\tnote{*} & --- \\
    $K_{10}$ & $1.99$ & $79.5$ & $1.93 \times 10^{3}$\tnote{*} & --- & --- \\
    \bottomrule
  \end{tabular}
  \begin{tablenotes}
    \item[*] at least one of the two times is an estimate.
  \end{tablenotes}
  \end{threeparttable}
  \caption{Speedup of the parallel quotient algorithm over the naive algorithm.}
  \label{tab:speedups}
\end{table}

The gains grow with the degree, which is the point: every improvement here targets
the doubly exponential growth in $n$, and the ratios accordingly run from below $1$
in degree $0$ to $10^5$ and beyond in degree $3$.
They also grow with the sparsity of the graph, since pairing prunes nothing on a
complete graph.
$K_{10}$ is therefore the worst case and $T^3$, the largest but sparsest of our
examples, the best.

\subsection*{Preprocessing}

Preprocessing is excluded from the tables above because on this test set its effect
is all-or-nothing, but it is worth quantifying.
Removing dominated vertices collapses $K_{10}$ to a single vertex, making every
computation immediate, and converts $C_5^{star}$ into $C_5$.
For $\mathcal{H}_3(C_5^{star})$ this replaces a $176$ second computation with a
$0.637$ second one, a factor of $276$, at the cost of a single pass over the
vertex set.
The remaining three graphs have no dominated vertices, and preprocessing is a
no-op.

This is the behaviour to expect in general.
Preprocessing is cheap enough to be worth attempting unconditionally, and its payoff is governed entirely by how many dominated vertices the graph has, which is to say, it helps most on precisely the locally dense graphs where the rest of the algorithm struggles.

\subsection*{Validation}

The groups themselves are not the point of these experiments; the test graphs
were chosen because their homology is known, so that the output can be checked,
but it is worth recording that the checks pass.
Wherever two or more of the four implementations terminate on the same input, they
return the same answer, and the results agree with the values in
\cref{ex:homology_groups} and with those recorded in
\cite{barcelo-greene-jarrah-welker:comparison}.
The quotient algorithm over $\mathbb{F}_p$ and the auxiliary implementation over
$\mathbb{Q}$ described in \cref{sec:summary} likewise agree on every input on which
both terminate.

\section{Summary and open problems} \label{sec:summary}

Machine computation of discrete homology has reached degree $2$ for graphs on a
handful of vertices, and no further.
The algorithm presented here computes degree-$3$ groups for every graph in our test
set and the degree-$4$ group of the $5$-cycle, with speedups over the naive
algorithm running from $10^3$ to $10^{12}$ in degree $2$ and from $10^8$ to
$10^{25}$ in degree $3$ (\cref{tab:speedups}).

Generating $n$-cubes by pairing $(n-1)$-cubes, rather than filtering set maps, is
responsible for most of this, and for the rest of it indirectly, since the other
improvements depend on it.
On its own it gives factors of $10^2$ to $10^3$ in degree $2$ on the sparse
examples, and more than $10^{10}$ for $T^3$.
Carrying degeneracy data through that construction, and reading faces off by index
arithmetic, removes most of what remains of the setup cost.
Quotienting by the hyperoctahedral group contributes little below degree $2$ and a
great deal above it, converting our degree-$3$ estimates for $G^{sph}$ and
$C_5^{star}$ into actual computations and all but eliminating the rank computation
as a cost.
Preprocessing by deleting dominated vertices is cheap enough to attempt
unconditionally and can be decisive: it collapses $K_{10}$ to a point, and reduces
$\mathcal{H}_3(C_5^{star})$ from $176$ seconds to $0.637$.

\paragraph{Limitations.}
The improvements here target growth in the degree; growth in the size and
connectivity of the graph is barely affected, and $K_{10}$ remains the hardest of
our examples in every degree.
Parallelization is not free either: in degree $0$, and for most of our graphs in
degree $1$, the parallel implementation is slower than the sequential one, and it
repays its overhead only once the work per thread is substantial.

More fundamentally, the whole approach is shaped by the fact that the chain complex
cannot be built.
The Greene sphere, a graph on ten vertices, has more than $6.4 \times 10^{12}$
singular $5$-cubes.
This is what separates the present problem from the setting in which the reduction
machinery of computational topology was developed.
Matrix reduction with clearing and twist
\cite{chen-kerber:twist,bauer-kerber-reininghaus:chunks,bauer:ripser}, Morse-theoretic
simplification of filtrations \cite{mischaikow-nanda:morse}, coreduction
\cite{mrozek-batko:coreduction}, and edge collapse
\cite{boissonnat-siddharth-edge-collapse} all simplify a complex one already
possesses, and are remarkably effective at it; every technique here is instead a way
of avoiding the complex altogether.
We do not know whether the two can be reconciled.

\paragraph{Open problems and future directions.}

\begin{enumerate}
  \item \emph{The conjectures of \cref{sec:preprocessing}.}
    Each would extend preprocessing to a case we cannot currently handle.
    A proof in the case $\mathsf{Conn}_{xy} \geq 3$ would likely require a
    Mayer--Vietoris sequence for discrete homology; no such sequence is available,
    and only a restricted version, insufficient for our purposes, is known
    \cite[Proof of Thm.~5.2]{barcelo-greene-jarrah-welker:vanishing}.

  \item \emph{Scaling in the size of the graph.}
    This is the principal obstacle, and we do not currently see how to attack it.
    Progress here would matter more than any further constant-factor improvement.

  \item \emph{Edge graphs.}
    Write $E(G)$ for the graph whose vertices are the edges of $G$, with two of them
    joined when they are opposite edges of a $4$-cycle in $G$.
    The $n$-cubes of $G$ are then the vertices of $E^n(G)$ and the $(n+1)$-cubes its
    edges, so iterating $E$ generates all the maps we need.
    Searching for $4$-cycles is faster than pairing cubes, so this would improve map
    generation.
    We have not found a way to combine it with \cref{sec:quotient}, which requires
    pairing equivalence classes rather than individual maps, and in high degrees the
    quotient is worth more.
    A construction supporting both, in particular one that does not produce
    duplicate equivalence classes, would be a substantial gain.

  \item \emph{Reduction during generation.}
    Reducing the boundary coordinate vectors as they are produced should save
    memory.
    We have not found a formulation compatible with parallelization: reducing on each
    thread separately saves little, costs time, and does not compose into a cheaper
    global reduction.

  \item \emph{Algebraic Morse theory.}
    Kozlov's algebraic Morse theory \cite{kozlov-algebraic-morse} applies to any
    based chain complex, so in principle it applies to $\mathcal{C}_\bullet(G)$, and
    the gains it brings elsewhere in computational topology are substantial
    \cite{mischaikow-nanda:morse}.
    Whether it can be made to apply in practice is another matter, since a Morse
    matching is a structure on the complex and the complex is exactly what we cannot
    afford to write down.
    A version that could be computed alongside the inductive generation of cubes,
    rather than after it, would be a genuine advance.

  \item \emph{Neighbouring theories.}
    The specialized degree-$1$ algorithms for path homology
    \cite{chowdhury-memoli:path,dey-li-wang:path} suggest a complementary line of
    attack: rather than computing $\mathcal{H}_n$ for general $n$ faster, characterize
    what a discrete homology class looks like in a fixed low degree and build an
    algorithm around that characterization.

  \item \emph{Persistence.}
    Extending these computations to filtered graphs, as required by the applications
    in \cite{kapulkin-kershaw:tda}, is the subject of forthcoming work.
\end{enumerate}

\paragraph{Code.}
The implementation is available at
\url{https://github.com/nkershaw01/Discrete_Cubical_Homology}.
It provides three entry points: the algorithm of this paper, computing over
$\mathbb{F}_p$ for the smallest prime $p > n+1$; a variant over $\mathbb{Q}$ using
Julia's built-in rank, which is faster but not guaranteed correct for
ill-conditioned matrices; and a dedicated routine for $\mathcal{H}_1$ over
$\mathbb{Z}/2$ based on the edge graph described above.

 \bibliography{all-refs.bib}

 \appendix
 \renewcommand{\thesection}{\Alph{section}}

\end{document}